\theoremstyle{plain}
\newtheorem{theorem}{Theorem}[section]
\newtheorem{proposition}[theorem]{Proposition}
\newtheorem{lemma}[theorem]{Lemma}
\newtheorem{corollary}[theorem]{Corollary}
\theoremstyle{definition}
\newtheorem{remark}{Remark}[section]
\renewenvironment{proof}[1][\proofname]{    {\noindent \textit{Proof.}}
}{{\hfill{$\square$}}}
\newcommand{\rank}{{\rm rank}}
\newcommand{\hull}{{\rm Hull}}
\newcommand{\diag}{{\rm diag}}
\begin{document}
    
    \title{Hulls of Linear Codes Revisited with  Applications\thanks{This research was supported by the Thailand Research Fund and  Silpakorn University under Research Grant RSA6280042}}
    
    \author{Somphong  Jitman and Satanan Thipworawimon\thanks{ 
            S. Jitman  and S. Thipworawimon are with the  Department of Mathematics, Faculty of Science, Silpakorn University,
            Nakhon Pathom 73000, Thailand.
            Email: {sjitman@gmail.com, thipworawimon\_s@su.ac.th}.
    }}
    
    \maketitle

 \maketitle
 
 \begin{abstract} 
     Hulls of linear codes have been of interest and extensively studied  due to their rich algebraic structures and wide applications.  In this paper,  alternative characterizations of hulls of linear codes  are given  as well as their applications.   Properties of hulls of linear codes are given in terms of  their Gramians of their generator and  parity-check matrices.  Moreover, it is show  that the Gramian of a generator  matrix of every  linear code over a finite field of odd characteristic is  diagonalizable.   Subsequently, it is shown that a linear code over a finite field of odd characteristic is complementary dual if and only if it has an orthogonal basis.   Based on this characterization,      constructions of  good entanglement-assisted quantum error-correcting codes are provided.

 \end{abstract}

    \noindent {\bf Keywords}: {Hulls of linear codes, Gramians, Diagonalizability, Entanglement-assisted quantum error correcting codes}

    \noindent{\bf MSC 2010}: {94B05, 94B60}

\section{Introduction}
\label{intro}

Hulls     have  been introduced to classify finite projective planes in \cite{AK1990}. 
Later, it turned out that the hulls of linear codes play a vital role in determining the complexity of some algorithms in coding theory  in \cite{Leon82, Leon91,  Sendrier00, Sendrier01}.  
Due to their wide applications,  hulls of linear codes  have been extensively studied. The number of linear codes of length $n$ over $\mathbb{F}_q$ whose hulls have a common dimension and the average hull dimension  of linear codes were studied in \cite{GJG2018} and \cite{S1997}.  Recently, hulls of   linear codes have been studied and  applied in constructions of  good entanglement-assisted quantum error correcting codes in \cite{GJG2018} and \cite{LC2018}.

Some families of linear codes with special hulls such as 
self-orthogonal codes and linear complementary dual (LCD)  codes have been of interest and extensively studied. Precisely, self-orthogonal codes are linear codes with maximal hull and LCD codes are linear codes of minimal hull. These codes are practically useful in communications systems, various applications, and link with other objects as shown in \cite{LN1997, P1972, EJKL2013, JLLX2010, JX2012, M1992, CG2016, J2017, CMTQ2018,CMTQ2019 , GJG2018, QZ2015, CMT2018,S2004,FKMR2019,PZL2018} and  references therein. Therefore, it is of interest to studied  hulls, families of linear codes with special hulls and their applications.

In this paper, we focus on alternative characterizations of hulls of linear codes and  their applications.  Properties of hulls of linear codes are given in terms of the  Gramians of their  generator and parity-check matrices.   Subsequently, it is shown  that the Gramian of generator    matrix of every  linear code over a finite field of odd characteristic is diagonalizable.   This implies that  a linear code over a finite field of odd characteristic is LCD if and only if it has an orthogonal basis. Some   classes of codes with special hulls such as self-orthogonal, maximal self-orthogonal, complementary dual codes are re-formalized based on these  characterizations.  Constructions of some good entanglement-assisted quantum error-correcting codes are given based on the above discussion.

The paper is organized as follows. After this introduction, the definition and  preliminary results on Euclidean hulls of linear codes are recalled in   Section~\ref{sec2}.  In Section~\ref{sec3},  characterizations  and properties of  Euclidean hulls of linear  codes are discussed as well as some  remarks on  linear codes with special Euclidean hulls.   A note on relevant results on Hermitian hulls of linear codes is given in Section~\ref{sec5}. Applications of hulls  to constructions of entanglement-assisted quantum error-correcting codes  are discussed  in Section~\ref{sec6}.

\section{Preliminaries}  \label{sec2}

Let  $\mathbb{F}_{q}$ denote the finite field of order $q$. For a positive integer $n$, a {\em linear code} of length $n$  over $\mathbb{F}_{q}$  is defined to be a  subspace of the $\mathbb{F}_{q}$-vector space $\mathbb{F}_{q}^n$. A linear code  $C$  of length $n$ over $\mathbb{F}_{q}$ is called an {\em $[n,k]_{q}$ code} if its  $\mathbb{F}_{q}$-dimension is $k$. In addition, if the minimum Hamming distance of $C$ is $d$, the code $C$  is called an  {\em $[n,k,d]_{q}$ code}.
For $\boldsymbol{u}=(u_1,u_2,\ldots,u_n)$ and $\boldsymbol{v}=(v_1,v_2,\ldots,v_n)$ in $\mathbb{F}_{q}^n$, the \textit{Euclidean inner product} of $\boldsymbol{u}$ and $\boldsymbol{v}$ is defined to be
\[\langle \boldsymbol{u}, \boldsymbol{v} \rangle:=\sum_{i=1}^{n} u_i v_i.\]
For a  linear code $C$ of length $n$ over $\mathbb{F}_{q}$,   the {\em Euclidean dual} $C^\perp$  of $C$ is defined to be  the set 
\[C^\perp=\{ \boldsymbol{v} \in  \mathbb{F}_{q}^n\mid  \langle \boldsymbol{c}, \boldsymbol{v} \rangle =0 \text{ for all }  \boldsymbol{c}\in C\}.\]
A linear code $C$ is said to be {\em Euclidean self-orthogonal} if $C\subseteq C^{\perp_E}$ and it is said to be {\em Euclidean self-dual} if $C=C^{\perp_E}$. 	A linear code  $C$ is called a {\em maximal Euclidean self-orthogonal code} if it is  Euclidean self-orthogonal and it is  not contained in any Euclidean self-orthogonal codes.  A linear code $C$ is said to be {\em Euclidean complementary dual} if  $C\cap C^\perp =\{\boldsymbol{0}\}$. The {\em Euclidean hull} of a linear  code $C$ is defined to be $\hull(C)=C\cap {C}^{\perp_E}$.   It  is not difficult to see that  a linear code $C$ is Euclidean self-orthogonal if and only if $\hull(C)=C$ and it is  Euclidean complementary dual if  and only if  $\hull(C)=\{\boldsymbol{0}\}$.

A $k\times n$ matrix $G$ over $\mathbb{F}_q$ is called a {\em generator matrix} for  an  $[n,k]_{q}$ code $C$ if the rows of $G$ form a basis for $C$. A {\em parity-check matrix} for $C$ is defined to a generator matrix of $C^\perp$.   For an $m\times n$ matrix $A$ over $\mathbb{F}_q$, by abuse of notation,  the {\em Gram matrix}  (or {\em Gramian}) of  $A$ is defined to be  $AA^T$.  The Gramian of a generator or parity-check matrix of a linear code plays an important role in the study of self-orthogonal codes, complementary dual codes, and hulls of linear codes.

\begin{proposition}[{\cite[Proposition 3.1]{GJG2018}}] \label{hull1-E}Let $C$ be a linear $[n,k,d]_q$ code with parity check matrix $H$ and generator matrix $G$. The ranks of the Gramians  $HH^{T}$ and  $GG^T$ are independent of $H$ and $G$ so that 
    $$\rank(HH^{T})=n-k-\dim (\hull(C))=n-k-\dim (\hull(C^{\perp})),$$
    and
    $$ \rank(GG^{T})=k-\dim (\hull(C))=k-\dim (\hull(C^{\perp})). $$
\end{proposition}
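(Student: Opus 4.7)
\medskip

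\noindent\textbf{Proof plan.} The plan is to treat the generator-matrix case first, then transfer to the parity-check matrix case by noting that $H$ is itself a generator matrix of $C^\perp$. The two displayed equalities have exactly the same flavor once one observes the tautology $\hull(C)=C\cap C^\perp=C^\perp\cap C=\hull(C^\perp)$, so the second ``$=$'' in each line is free of charge.

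\medskip

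The first order of business is the independence of $\rank(GG^T)$ from the choice of $G$. Any two generator matrices $G$ and $G'$ of $C$ are related by $G'=PG$ for some invertible $P\in GL_k(\mathbb{F}_q)$; hence $G'(G')^T = P\,GG^T\,P^T$, and since left/right multiplication by invertible matrices preserves rank, $\rank(G'(G')^T)=\rank(GG^T)$. The same argument handles $H$.

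\medskip

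The substantive step is to identify the left kernel of $GG^T$ with $\hull(C)$ via the coordinate isomorphism $\phi:\mathbb{F}_q^{k}\to C$, $v\mapsto vG$. If $vGG^T=\boldsymbol{0}$, then $(vG)G^T=\boldsymbol{0}$, which says that the codeword $vG$ is orthogonal to every row of $G$, hence orthogonal to every element of $C$; thus $vG\in C\cap C^\perp=\hull(C)$. Conversely, if $vG\in\hull(C)$, then $vG$ is orthogonal to every row of $G$, so $(vG)G^T=vGG^T=\boldsymbol{0}$. Therefore $\phi$ restricts to an isomorphism between the left kernel of $GG^T$ and $\hull(C)$, giving
\[
\rank(GG^T)=k-\dim\bigl(\ker\phi\text{-preimage of }\hull(C)\bigr)=k-\dim(\hull(C)).
\]
Combined with $\hull(C)=\hull(C^\perp)$, this yields the second displayed equation.

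\medskip

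For the first displayed equation I would simply reapply the argument with $C$ replaced by $C^\perp$: since $H$ is a generator matrix of the $[n,n-k]_q$ code $C^\perp$, the same reasoning gives
\[
\rank(HH^T)=(n-k)-\dim(\hull(C^\perp))=(n-k)-\dim(\hull(C)).
\]
I do not foresee a real obstacle; the only point that requires care is bookkeeping the row/column conventions so that ``$vGG^T=\boldsymbol{0}$ iff $vG\in C^\perp$'' is stated on the correct side. Once that is fixed, the proposition reduces to the rank-nullity theorem applied to the $k\times k$ matrix $GG^T$ acting on the ambient $\mathbb{F}_q^k$, transported through $\phi$.
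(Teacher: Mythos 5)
Your proof is correct: the paper states this proposition as a quoted result from \cite[Proposition 3.1]{GJG2018} and supplies no proof of its own, and your argument --- rank invariance under $G\mapsto PG$ for $P$ invertible, identification of the left kernel of $GG^T$ with $\hull(C)$ via the injective map $v\mapsto vG$, rank--nullity, and transfer to $H$ as a generator matrix of the $[n,n-k]_q$ code $C^\perp$ --- is the standard one and is sound. The only point worth making explicit is that the step $\hull(C)=\hull(C^\perp)$ uses $(C^\perp)^\perp=C$, i.e.\ the nondegeneracy of the Euclidean form on $\mathbb{F}_q^n$, rather than being a pure tautology.
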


From this  proposition, it is well known  that  a linear   code with   generator matrix $G$  is   Euclidean self-orthogonal if and only if  the Gramian $GG^T$ is  zero  and it is Euclidean complementary dual if and only if  the Gramian $GG^T$ is non-singular. It can be summarized in the next corollary. 
\begin{corollary}Let $C$ be a linear   code with   generator matrix $G$. Then the following statements hold.
    \begin{enumerate}
        \item  $C$ is Euclidean self-orthogonal if and only if  $GG^T=[0]$. 
        \item $C$ is Euclidean complementary dual  if and only if  $GG^T$ is non-singular
    \end{enumerate}   
\end{corollary}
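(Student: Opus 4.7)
The plan is to derive both equivalences as a direct consequence of Proposition~\ref{hull1-E}, which already furnishes the key identity
\[
\rank(GG^{T}) = k - \dim(\hull(C)).
\]
Since $G$ is a $k\times n$ matrix, the Gramian $GG^{T}$ is a $k\times k$ matrix, so its rank is at most $k$ and its rank extremes ($0$ and $k$) correspond respectively to $GG^{T}$ being the zero matrix and to $GG^{T}$ being non-singular. The preparatory observation from Section~\ref{sec2} that $C$ is Euclidean self-orthogonal iff $\hull(C)=C$, and that $C$ is Euclidean complementary dual iff $\hull(C)=\{\boldsymbol{0}\}$, then translates these rank extremes into the stated hull conditions.

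For part (1), the plan is to argue in both directions. If $C$ is Euclidean self-orthogonal, then $\hull(C)=C$ has dimension $k$, so the rank formula gives $\rank(GG^{T})=0$, i.e.\ $GG^{T}=[0]$. Conversely, if $GG^{T}=[0]$, then every pair of rows of $G$ is orthogonal; since the rows of $G$ form a basis of $C$, bilinearity extends this to $\langle \boldsymbol{c},\boldsymbol{c}'\rangle=0$ for all $\boldsymbol{c},\boldsymbol{c}'\in C$, giving $C\subseteq C^{\perp}$. (Equivalently, one can invoke the rank formula in reverse: $\rank(GG^{T})=0$ forces $\dim\hull(C)=k=\dim C$, hence $\hull(C)=C$.)

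For part (2), the argument is analogous. If $C$ is Euclidean complementary dual, then $\hull(C)=\{\boldsymbol{0}\}$ has dimension $0$, so $\rank(GG^{T})=k$; since $GG^{T}$ is $k\times k$, this means $GG^{T}$ is non-singular. Conversely, non-singularity of the $k\times k$ matrix $GG^{T}$ gives $\rank(GG^{T})=k$, so $\dim\hull(C)=0$ and $\hull(C)=\{\boldsymbol{0}\}$.

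Honestly there is no genuine obstacle here: the corollary is a bookkeeping consequence of the rank identity in Proposition~\ref{hull1-E}. The only point that requires a line of justification rather than a one-liner is recognizing that the $k\times k$ size of $GG^{T}$ is what turns ``full rank'' into ``non-singular'' in part (2), and that ``rank zero'' is the same as ``the zero matrix'' in part (1); both are immediate from elementary linear algebra over $\mathbb{F}_{q}$.
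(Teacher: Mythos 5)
Your proposal is correct and follows exactly the route the paper intends: the paper presents this corollary as an immediate consequence of Proposition~\ref{hull1-E}, reading off the two rank extremes $\rank(GG^{T})=0$ and $\rank(GG^{T})=k$ from the identity $\rank(GG^{T})=k-\dim(\hull(C))$ together with the characterizations $\hull(C)=C$ and $\hull(C)=\{\boldsymbol{0}\}$. No substantive difference from the paper's (unwritten but clearly indicated) argument.
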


From Proposition \ref{hull1-E},  it is not difficult to see that  generator and parity-check matrices of linear codes can be chosen such that their Gramians are of the following special forms (cf. \cite[{Corollary 3.2}]{LP2018}). 

\begin{proposition}\label{diag1}
    Let $C$ be a linear $[n,k]_q$ code such that  $\dim(\hull(C))=\ell$.  Then  the following statements hold.
    \begin{enumerate}
        \item  There exist  
        a parity-check matrix $H$ of $C$ and    an invertible 
        $(n-k-\ell)\times (n-k-\ell) $ symmetric  matrix $A$ over $\mathbb{F}_q$ such that  the Gramian of $H$ is of the form 
        \[HH^T= \left[ \begin{array}{c|c}
        A &0 \\ \hline
        0 & 
        0
        \end{array}
        \right]. \]    
        \item  There exist  
        a generator  matrix $G$ of $C$ and    an invertible 
        $(k-\ell)\times (k-\ell) $  symmetric   matrix $B$ over $\mathbb{F}_q$ such that 	such that  the Gramian of $G$ is of the form 
        \[GG^T= \left[ \begin{array}{c|c}
        B &0 \\ \hline
        0 & 
        0
        \end{array}
        \right]. \]    
    \end{enumerate} 
\end{proposition}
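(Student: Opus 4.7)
The plan is to prove part (2) first by a direct basis-choice construction and then to deduce part (1) by applying the same argument to the dual code $C^\perp$.

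For part (2), the key idea is to exploit the defining property of the hull: every vector in $\hull(C) = C \cap C^\perp$ is orthogonal to every vector of $C$. First I would fix a basis $\{\boldsymbol{h}_1,\dots,\boldsymbol{h}_\ell\}$ of $\hull(C)$ and extend it to a basis $\{\boldsymbol{g}_1,\dots,\boldsymbol{g}_{k-\ell},\boldsymbol{h}_1,\dots,\boldsymbol{h}_\ell\}$ of $C$. Arranging these vectors as the rows of a generator matrix
$$G = \begin{pmatrix} G_1 \\ G_2 \end{pmatrix},$$
where $G_1$ is the $(k-\ell)\times n$ block with rows $\boldsymbol{g}_i$ and $G_2$ is the $\ell\times n$ block with rows $\boldsymbol{h}_j$, the natural block decomposition reads
$$GG^T = \begin{pmatrix} G_1 G_1^T & G_1 G_2^T \\ G_2 G_1^T & G_2 G_2^T \end{pmatrix}.$$
Since each row of $G_2$ lies in $C^\perp$ and each row of $G$ lies in $C$, we have $G_2 G^T = 0$, which forces the three off-corner blocks to vanish and yields
$$GG^T = \begin{pmatrix} B & 0 \\ 0 & 0 \end{pmatrix}, \quad B := G_1 G_1^T.$$
The matrix $B$ is automatically symmetric, and Proposition \ref{hull1-E} gives $\rank(GG^T) = k - \ell$; hence $B$ must in fact be an invertible $(k-\ell)\times (k-\ell)$ symmetric matrix, proving (2).

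For part (1), I would observe that a parity-check matrix of $C$ is, by definition, a generator matrix of $C^\perp$, and that Proposition \ref{hull1-E} gives $\dim(\hull(C^\perp)) = \ell$. Applying part (2) to the linear $[n,n-k]_q$ code $C^\perp$ therefore produces a generator matrix $H$ of $C^\perp$ whose Gramian has the required form with an invertible symmetric $(n-k-\ell)\times(n-k-\ell)$ corner $A$.

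There is no serious obstacle here; the argument is essentially a basis-extension trick. The only ingredient one must supply from outside is the rank identity $\rank(GG^T)=k-\ell$, which guarantees invertibility of $B$ after the off-diagonal blocks have been killed, and this is already provided by Proposition \ref{hull1-E}.
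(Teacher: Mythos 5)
Your proof is correct and is precisely the argument the paper has in mind: the paper omits the proof, stating only that the result follows from Proposition \ref{hull1-E}, and your basis-extension construction (hull vectors kill the off-diagonal blocks, the rank identity forces invertibility of the corner block, and part (1) follows by dualizing) fills in exactly those omitted details. No issues.
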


Clearly,  the Gramians of generator and  parity-check matrices  of  linear codes  are  always symmetric.  Unlike  real symmetric matrices, a square symmetric matrix over finite fields does not need to be diagonalizable.  From Proposition \ref{diag1}, it is therefore  interesting to ask whether the Gramian of  a generator/parity-check matrix   of a linear code is diagonalizable.  Equivalently,  does a linear code have a generator matrix   whose Gramian is a diagonal matrix? 
In Proposition \ref{prop:DiagGram}, we provide a  solution to this problem for the  case where $q$ is an odd prime power. A partial solution for the case where $q$ is an even prime power is given in Proposition \ref{diag_maxC}.

\section{Euclidean Hulls of Linear Codes} \label{sec3}

In this section, properties of hulls of linear codes are discussed.  Alternative characterizations of the hull and the  hull dimension of linear codes are  given.  Conditions for  generator and parity-check matrices of linear codes  to have diagonalizable Gramians are  provided. 

\subsection{Characterizations of Euclidean Hulls of Linear Codes}
The Euclidean hull dimension of linear codes has been determined in terms of the rank of the Gramians of generator and parity-check matrices of linear  codes in \cite{GJG2018} (see  Proposition \ref{hull1-E}).  

In the following proposition,  alternative characterizations of the Euclidean hull dimension of linear codes are given.

\begin{proposition}\label{rank_eq} Let $C$ be a linear  $[n,k]_q$  code and let $\ell$ be a non-negative integer. Then the following statements are equivalent.     
    \begin{enumerate}[$1)$]
        \item  $\dim(\hull(C))=\ell$.
        \item  $\rank(GG^T)=k-\ell$  for every generator matrix $G$ of $C$. 
        \item $\rank(G_1G_2^T)=k-\ell$  for all generator matrices $G_1$  and $G_2$ of $C$. 
        \item  $\rank(HH^T)=n-k-\ell$  for every  parity-check matrix $H$ of $C$. 
        \item $\rank(H_1H_2^T)=n-k-\ell$  for all parity-check matrices $H_1$  and $H_2$ of $C$. 
    \end{enumerate}
\end{proposition}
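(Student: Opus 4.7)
The plan is to leverage Proposition \ref{hull1-E} to establish the equivalences $(1) \Leftrightarrow (2)$ and $(1) \Leftrightarrow (4)$ essentially for free, and then focus the real work on the new content of this statement, namely $(2) \Leftrightarrow (3)$ and $(4) \Leftrightarrow (5)$. The implications $(3) \Rightarrow (2)$ and $(5) \Rightarrow (4)$ are immediate by the specialization $G_1=G_2=G$ (respectively $H_1=H_2=H$), so the substantive directions are $(2) \Rightarrow (3)$ and $(4) \Rightarrow (5)$.

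The key linear-algebraic observation is that any two generator matrices of the same $[n,k]_q$ code $C$ are related by an invertible change of basis: if $G_1$ and $G_2$ are generator matrices of $C$, then there exists an invertible $k \times k$ matrix $P$ over $\mathbb{F}_q$ with $G_2 = P G_1$. Hence
\[
G_1 G_2^T \;=\; G_1 (P G_1)^T \;=\; G_1 G_1^T P^T,
\]
and since $P^T$ is invertible, right-multiplication by it preserves rank, giving $\rank(G_1 G_2^T) = \rank(G_1 G_1^T)$. Combining this with assumption $(2)$ applied to $G_1$ itself yields $\rank(G_1 G_2^T) = k - \ell$, which is $(3)$. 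The argument for $(4) \Rightarrow (5)$ is structurally identical with $H$ replacing $G$ and $n-k$ replacing $k$, using that any two parity-check matrices of $C$ are generator matrices of $C^\perp$ and therefore differ by an invertible $(n-k) \times (n-k)$ factor.

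To close the loop with $(1)$, I would simply cite Proposition \ref{hull1-E} for $(1) \Leftrightarrow (2)$ and $(1) \Leftrightarrow (4)$, then combine these with the equivalences $(2) \Leftrightarrow (3)$ and $(4) \Leftrightarrow (5)$ derived above to get the full chain. The proof can thus be packaged in the order $(1) \Leftrightarrow (2) \Leftrightarrow (3)$ and separately $(1) \Leftrightarrow (4) \Leftrightarrow (5)$.

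There is no real obstacle here; the only subtlety is making sure the change-of-basis argument is stated in a way that applies symmetrically to both factors, so that mixed products like $G_1 G_2^T$ are handled correctly rather than being lumped together with $G G^T$. I would be explicit that the invertible factor $P^T$ appears on the right of $G_1 G_1^T$, so that rank invariance under right-multiplication by an invertible matrix is the only fact needed beyond Proposition \ref{hull1-E}.
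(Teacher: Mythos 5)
Your proposal is correct and follows essentially the same route as the paper: both reduce $(1)\Leftrightarrow(2)$ and $(1)\Leftrightarrow(4)$ to Proposition \ref{hull1-E} and then prove $(2)\Leftrightarrow(3)$ (and symmetrically $(4)\Leftrightarrow(5)$) via the change-of-basis relation between generator matrices and rank invariance under multiplication by invertible matrices. The only cosmetic difference is that the paper writes $G_1=E_1G$ and $G_2=E_2G$ relative to a common reference matrix $G$, obtaining $G_1G_2^T=E_1(GG^T)E_2^T$, whereas you relate $G_2$ directly to $G_1$; the underlying argument is identical.
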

\begin{proof}
    From Proposition \ref{hull1-E}, we have the equivalences  $1) \Leftrightarrow 2)$ and $1) \Leftrightarrow 4)$.  It remains  to prove  the equivalences  $2) \Leftrightarrow 3)$ and 
    $4) \Leftrightarrow 5)$.  Since the arguments of the proofs are similar,  only the  detailed proof of   $2) \Leftrightarrow 3)$ is provided.
    
    To prove   $2) \Rightarrow 3)$, let  $ G $, $G_1$ and $G_2$ be generator matrices of $C$ and assume that  	$\rank(GG^T)= k-\ell$.   Since  the rows of  $ G $, $G_1$ and $G_2$ are base for $C$,  there exist invertible $k\times k$ matrices   $E_1$ and $E_2$ such that $G_1 = E_1G$   and $ G_2 = E_2G$.  Consequently, we have  $G_1G^T_2=E_1G(E_2G)^T=E_1G(G^TE^T_2)=E_1(GG^T)E^T_2$. Since   $E_1$ and $E_2^T$ are invertible,   we have 
    \[\rank(G_1G_2^T)=\rank(E_1(GG^T)E^T_2)=\rank(GG^T) =k-\ell\] as desired
    
    The statement    $3) \Rightarrow 2)$  is obvious.
\end{proof}

Based on Proposition \ref{rank_eq}, we have the following  characterizations.

\begin{corollary}\label{cor:G1G2} Let $C$ be a linear  $[n,k]_q$  code and let $\ell$ be a non-negative integer. Then the following statements are equivalent. 
    
    \begin{enumerate}[$1)$]
        \item  $\dim(\hull(C))=\ell$.
        \item There exist nonzero elements $a_1,a_2,\dots,a_{k-\ell}$ in $\mathbb{F}_q$ and  generator matrices $G_1$  and $G_2$ of $C$  such that \[G_1G_2^T =\diag(a_1,a_2,\dots,a_{k-\ell}, 0,\dots,0).\]
        \item There exist nonzero elements $b_1,b_2,\dots,b_{n-k-\ell}$ in $\mathbb{F}_q$ and  parity-check matrices $H_1$  and $H_2$ of $C$  such that \[H_1H_2^T =\diag(b_1,b_2,\dots,b_{n-k-\ell}, 0,\dots,0).\]
    \end{enumerate}
    By convention, the set $\{a_1,a_2,\dots,a_{k-\ell}\} $ (resp., $\{b_1,b_2,\dots,b_{n-k-\ell}\} $)  will be referred to the empty set if $k-\ell=0$ (resp., $n-k-\ell=0$).
\end{corollary}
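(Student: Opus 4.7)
The plan is to obtain the two non-trivial implications $1)\Rightarrow 2)$ and $1)\Rightarrow 3)$ by combining Proposition \ref{diag1} with a standard matrix-equivalence fact, while the reverse implications $2)\Rightarrow 1)$ and $3)\Rightarrow 1)$ will follow at once from Proposition \ref{rank_eq}.

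For $2)\Rightarrow 1)$ (and symmetrically $3)\Rightarrow 1)$), I would simply observe that a matrix of the form $\diag(a_1,\ldots,a_{k-\ell},0,\ldots,0)$ with every $a_i$ nonzero has rank exactly $k-\ell$, so Proposition \ref{rank_eq} immediately yields $\dim(\hull(C))=\ell$. The empty-set convention covers the boundary cases $k-\ell=0$ and $n-k-\ell=0$ without modification.

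For $1)\Rightarrow 2)$, I would first invoke part 2 of Proposition \ref{diag1} to pick a generator matrix $G$ of $C$ for which
\[
GG^T=\left[\begin{array}{c|c} B & 0 \\ \hline 0 & 0 \end{array}\right],
\]
with $B$ an invertible $(k-\ell)\times(k-\ell)$ symmetric matrix. Since any two generator matrices of $C$ differ by an invertible left factor, writing $G_1=E_1G$ and $G_2=E_2G$ gives $G_1G_2^T=E_1(GG^T)E_2^T$. I would restrict to $E_i$ of block-diagonal form $\mathrm{diag}(F_i,I_\ell)$, which preserves the outer zero blocks and reduces the task to finding invertible $F_1,F_2$ with $F_1BF_2^T$ equal to an invertible diagonal matrix. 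Because $B$ has full rank over the field $\mathbb{F}_q$, elementary row and column operations (equivalently, Smith normal form) produce invertible $P,Q$ with $PBQ=I_{k-\ell}$; setting $F_1=P$ and $F_2^T=Q$ gives the identity, and prescribing any nonzero $a_1,\dots,a_{k-\ell}$ is then achieved by multiplying $F_1$ on the left by $\diag(a_1,\dots,a_{k-\ell})$. The implication $1)\Rightarrow 3)$ is identical, applied to a parity-check matrix via part 1 of Proposition \ref{diag1}.

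I do not anticipate a serious obstacle, since the argument rests entirely on two results already in hand (Propositions \ref{rank_eq} and \ref{diag1}) together with the elementary fact that over a field every invertible matrix is equivalent to the identity. The single point worth spelling out in the full write-up is the legitimacy of the block-diagonal choice of $E_1,E_2$: this is automatic because $G$ is fixed by Proposition \ref{diag1}, whereas $E_1,E_2$ range over \emph{all} invertible $k\times k$ matrices, so in particular the block-diagonal ones are admissible.
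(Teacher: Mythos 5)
Your proof is correct and takes essentially the same route as the paper: the reverse implications follow from Proposition \ref{rank_eq}, and the forward ones from writing $G_1G_2^T=E_1(GG^T)E_2^T$ and reducing a rank-$(k-\ell)$ symmetric matrix to diagonal form by row/column equivalence over $\mathbb{F}_q$. The only cosmetic difference is that the paper applies the elementary operations directly to $GG^T$ (whose rank is $k-\ell$ by Proposition \ref{rank_eq}), skipping your intermediate appeal to the block form of Proposition \ref{diag1}.
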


\begin{proof}
    To prove $1) \Leftrightarrow 2)$, assume that  $\dim(\hull(C))=\ell$.  Let  $G$  be a  generator matrix of $C$. 
    By Proposition \ref{rank_eq}, we have   that  $\rank(GG^T)=k-\ell$.  Applying suitable elementary row and column operations, it follows that   \[(PG)(QG)^T=PGG^TQ^{T}=\diag(a_1,a_2,\dots,a_{k-\ell}, 0,\dots,0)\]   for some nonzero elements  $a_1,a_2,\dots,a_{k-\ell}$ in $\mathbb{F}_q$ and invertible $k\times k$ matrices $P$ and $Q$ over  $\mathbb{F}_q$. 
    Let $G_1=PG$ and $G_2=QG$.  Then $G_1$ and $G_2$ are  generator matrices of $C$ such that  $G_1G_2^T =\diag(a_1,a_2,\dots,a_{k-\ell}, 0,\dots,0)$.
    
    Conversely, assume that $2)$ holds.  Then $\rank(G_1G_2^T)=k-\ell$ and hence $\dim(\hull(C))=\ell$    by Proposition \ref{rank_eq}.
    
    Since $\hull(C)=\hull(C^\perp)$, the equivalence  $1) \Leftrightarrow 3)$  can be obtained similarly.
\end{proof}

\subsection{Diagonalizability  of Gramians}

From Subsection \ref{sec3}.1,  it guarantees that for a given linear code $C$ over $\mathbb{F}_q$, there exist generator matrices $G_1$ and $G_2$ of $C$ such that $G_1G_2^T$ is a diagonal matrix. Here, we focus on the diagonalizability the Gramian of a generator matrix of a linear code. The results are given in two cases based on the characteristic of the underlying finite field.

\subsubsection{Odd Characteristics}
For an odd prime power $q$, the Gramian of a generator/parity-check  matrix of a linear code over $\mathbb{F}_q$  will be shown to be  diagonalizable. 

We begin with the following useful lemma. 
\begin{lemma}\label{lem:vv} Let $C$ be a   linear code  of length $n$ over $\mathbb{F}_q$. If $q$ is odd and $C$ is not Euclidean self-orthogonal, then there exists an element $\boldsymbol{v}\in  C$ such that $\langle \boldsymbol{v},\boldsymbol{v}\rangle\ne 0$.  In this case, $\boldsymbol{v} \notin \hull(C)$. 
\end{lemma}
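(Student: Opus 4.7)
The plan is a standard polarization argument, relying crucially on $2$ being invertible in $\mathbb{F}_q$. Since $C$ is not Euclidean self-orthogonal, there exist (not necessarily distinct) codewords $\boldsymbol{u}, \boldsymbol{w} \in C$ with $\langle \boldsymbol{u}, \boldsymbol{w}\rangle \ne 0$. The goal is to exhibit from $\boldsymbol{u}$ and $\boldsymbol{w}$ a single codeword whose self inner product is nonzero.

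First I would split into cases. If either $\langle \boldsymbol{u}, \boldsymbol{u}\rangle \ne 0$ or $\langle \boldsymbol{w}, \boldsymbol{w}\rangle \ne 0$, the corresponding vector already serves as the desired $\boldsymbol{v}$. Otherwise, $\langle \boldsymbol{u}, \boldsymbol{u}\rangle = \langle \boldsymbol{w}, \boldsymbol{w}\rangle = 0$, and I would take $\boldsymbol{v} := \boldsymbol{u}+\boldsymbol{w} \in C$. Expanding by bilinearity and symmetry of the Euclidean form gives
\[
\langle \boldsymbol{v}, \boldsymbol{v}\rangle = \langle \boldsymbol{u}, \boldsymbol{u}\rangle + 2\langle \boldsymbol{u}, \boldsymbol{w}\rangle + \langle \boldsymbol{w}, \boldsymbol{w}\rangle = 2\langle \boldsymbol{u}, \boldsymbol{w}\rangle,
\]
which is nonzero because $q$ is odd (so $2$ is a unit in $\mathbb{F}_q$) and $\langle \boldsymbol{u}, \boldsymbol{w}\rangle \ne 0$. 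This is the one and only place where the odd characteristic hypothesis is used, and it is precisely why the statement can fail in characteristic $2$.

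For the final clause, if $\boldsymbol{v}$ belonged to $\hull(C) = C \cap C^{\perp}$, then $\boldsymbol{v}$ would be orthogonal to every codeword, including itself, forcing $\langle \boldsymbol{v},\boldsymbol{v}\rangle = 0$ and contradicting the preceding construction. Hence $\boldsymbol{v} \notin \hull(C)$.

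The ``hard part'' here is essentially cosmetic: the polarization identity is elementary, but one must be careful to note that $\boldsymbol{u}$ and $\boldsymbol{w}$ need not be distinct (the case $\boldsymbol{u}=\boldsymbol{w}$ falls under the first subcase) and that the failure of the argument in characteristic $2$ is not an accident but a genuine obstruction, foreshadowing why a separate treatment is needed for even $q$ later in the paper.
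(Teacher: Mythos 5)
Your proof is correct and follows essentially the same polarization argument as the paper: pick $\boldsymbol{u},\boldsymbol{w}\in C$ with $\langle\boldsymbol{u},\boldsymbol{w}\rangle\ne 0$, reduce to the case where both self inner products vanish, and take $\boldsymbol{v}=\boldsymbol{u}+\boldsymbol{w}$ so that $\langle\boldsymbol{v},\boldsymbol{v}\rangle=2\langle\boldsymbol{u},\boldsymbol{w}\rangle\ne 0$ by oddness of $q$. The concluding observation that $\boldsymbol{v}\notin\hull(C)$ matches the paper's as well.
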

\begin{proof} Assume that  $q$ is an odd prime power and $C$ is not Euclidean self-orthogonal.   Then there exist  $\boldsymbol{u}$ and $\boldsymbol{w}$  in $ C$ such that $\langle \boldsymbol{u},\boldsymbol{w}\rangle\ne 0$. If $\langle \boldsymbol{u},\boldsymbol{u}\rangle\ne 0$ or $\langle \boldsymbol{w},\boldsymbol{w}\rangle\ne 0$, we are done. Assume that $\langle \boldsymbol{u},\boldsymbol{u}\rangle= 0$ and $\langle \boldsymbol{w},\boldsymbol{w}\rangle= 0$. Let $  \boldsymbol{v}=\boldsymbol{u}+\boldsymbol{w}$. Since $q$ is odd, we have  $\langle \boldsymbol{v},\boldsymbol{v}\rangle=\langle  \boldsymbol{u},\boldsymbol{u}\rangle+ 2  \langle \boldsymbol{u},\boldsymbol{w}\rangle +\langle \boldsymbol{w},\boldsymbol{w}\rangle =2  \langle \boldsymbol{u},\boldsymbol{w}\rangle\ne 0$ as desired.  Clearly, the said element is not in $\hull(C)$. 
\end{proof}

\begin{proposition} \label{prop:DiagGram} Let $C$ be a non-zero  linear code  of length $n$ over $\mathbb{F}_q$. If $q$ is odd, then the Gramian of a generator matrix of  $C$  is diagonalizable. 
\end{proposition}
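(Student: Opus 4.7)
The plan is to prove the proposition by induction on $k = \dim_{\mathbb{F}_q} C$, constructing an orthogonal basis of $C$ in the classical Gram--Schmidt fashion. The key input at every stage is Lemma~\ref{lem:vv}, which supplies a vector with nonzero self inner product and thereby plays the role of a ``normalizable'' direction. I interpret the statement as asserting the existence of a generator matrix $G$ of $C$ with $GG^T$ diagonal; this is the natural reading because any two generator matrices $G$ and $G' = PG$ ($P$ invertible) yield Gramians related by the congruence $G'(G')^T = P(GG^T)P^T$, so the question is really one of diagonalizing the symmetric bilinear form induced on $C$.

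First I would dispose of the trivial cases. If $C$ is Euclidean self-orthogonal, then $GG^T = [0]$ for every generator matrix $G$ by the corollary following Proposition~\ref{hull1-E}, and $[0]$ is already diagonal. If $k = 1$, every Gramian is a $1\times 1$ matrix, hence trivially diagonal. So I may assume $k \geq 2$ and $C$ is not Euclidean self-orthogonal.

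For the inductive step, Lemma~\ref{lem:vv} produces a vector $\boldsymbol{v}_1 \in C$ with $\langle \boldsymbol{v}_1, \boldsymbol{v}_1\rangle \neq 0$. I would set
\[
C_1 = \{\boldsymbol{w} \in C : \langle \boldsymbol{v}_1, \boldsymbol{w}\rangle = 0\}.
\]
The map $\boldsymbol{w} \mapsto \langle \boldsymbol{v}_1, \boldsymbol{w}\rangle$ is a nonzero linear functional on $C$ (its value at $\boldsymbol{v}_1$ is nonzero), so $\dim_{\mathbb{F}_q} C_1 = k-1$, and since $\boldsymbol{v}_1 \notin C_1$ we obtain the direct sum decomposition $C = \mathbb{F}_q \boldsymbol{v}_1 \oplus C_1$. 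Applying the induction hypothesis to $C_1$ (which is either zero, in which case there is nothing to do, or a nonzero linear code of strictly smaller dimension), I obtain a basis $\boldsymbol{v}_2, \dots, \boldsymbol{v}_k$ of $C_1$ with $\langle \boldsymbol{v}_i, \boldsymbol{v}_j\rangle = 0$ for all $2 \leq i \neq j \leq k$. Combined with $\langle \boldsymbol{v}_1, \boldsymbol{v}_i\rangle = 0$ for $i \geq 2$ (the defining relation of $C_1$), the vectors $\boldsymbol{v}_1, \dots, \boldsymbol{v}_k$ form a basis of $C$, and the generator matrix $G$ having these vectors as rows satisfies $GG^T = \diag(\langle \boldsymbol{v}_1, \boldsymbol{v}_1\rangle, \dots, \langle \boldsymbol{v}_k, \boldsymbol{v}_k\rangle)$, completing the induction.

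The main obstacle, to the extent there is one, is packaged entirely inside Lemma~\ref{lem:vv}: odd characteristic enters only through the identity $\langle \boldsymbol{u}+\boldsymbol{w}, \boldsymbol{u}+\boldsymbol{w}\rangle = 2\langle \boldsymbol{u}, \boldsymbol{w}\rangle$ when $\boldsymbol{u}$ and $\boldsymbol{w}$ are isotropic, which collapses in characteristic $2$. Everything past that is a formal Gram--Schmidt peel-off, so the argument should be short and should also explain, by the same token, the need for a separate treatment in the even-characteristic case considered in Proposition~\ref{diag_maxC}.
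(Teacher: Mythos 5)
Your proposal is correct and follows essentially the same route as the paper: induction on $\dim C$, using Lemma~\ref{lem:vv} to extract a vector $\boldsymbol{v}$ with $\langle \boldsymbol{v},\boldsymbol{v}\rangle\neq 0$, splitting $C$ as $\mathbb{F}_q\boldsymbol{v}\oplus\{\boldsymbol{w}\in C:\langle\boldsymbol{v},\boldsymbol{w}\rangle=0\}$, and applying the induction hypothesis to the complement. The only cosmetic difference is that you make the dimension count for $C_1$ and the handling of the degenerate cases slightly more explicit than the paper does.
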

\begin{proof}  Assume that $q$ is an odd prime power. We prove by induction on the dimension of $C$.  If $\dim(C)=1$, then Gramian of a generator matrix of  $C$ is a $1\times 1$ matrix over $\mathbb{F}_q$ which is always diagonalizable.
    
    Assume that  $\dim(C)=k$ for some positive integer $k$ and assume that the statement holds true for all linear codes of dimension $k-1$.

    If $C$ is Euclidean self-orthogonal, then   $GG^T=[0]$    is diagonalizable for all generator matrices $G$ of $C$ by Proposition \ref{rank_eq}. 
    Assume that $C$ is not Euclidean self-orthogonal. Since $q$ is odd,  there exist $\boldsymbol{v} \in C$ such that  $\langle \boldsymbol{v},\boldsymbol{v}\rangle\ne 0$ by Lemma \ref{lem:vv}.  Let $D=\{ \boldsymbol{c } \in C \mid \langle \boldsymbol{v},\boldsymbol{c}\rangle=0  \}$.  Since $\langle \boldsymbol{v},\boldsymbol{v}\rangle\ne 0$, we have $C=D\oplus \langle \boldsymbol{v}\rangle$ which implies that $\dim(D)=k-1$.  By the induction hypothesis, there exists a  generator  matrix \[G=\begin{bmatrix}
    \boldsymbol{v}_1\\
    \boldsymbol{v}_2\\
    \vdots\\
    \boldsymbol{v}_{k-1}
    \end{bmatrix} \] of $D$  whose Gramian $GG^T$ is diagonal. Since $ \{\boldsymbol{v}_1,\boldsymbol{v}_2,\dots, \boldsymbol{v}_{k-1} \}\subseteq D$,   $\langle \boldsymbol{v}_i,\boldsymbol{v}\rangle= 0$  for all $1\leq i\leq k-1$. Hence, $G'=\begin{bmatrix}
    \boldsymbol{v}\\
    G
    \end{bmatrix} $ is a generator matrix for $C$ such that the Gramian  $G'G'^T$ is a diagonal matrix.
\end{proof}

The following corollary is a direct consequence of Proposition \ref{prop:DiagGram}. Since a parity-check matrix of a linear code is a generator matrix for its dual,  the above results can be restated including  the parity-check matrix easily.  


\begin{corollary}  \label{cor:odd2}  Let $C$ be a linear $[n,k]_q$ code such that  $\dim({\hull}(C))=\ell$.  If  $q$ is odd, then  the following statements hold.
    
    \begin{enumerate}
        \item There exist   nonzero elements $a_1,a_2,\dots, a_{k-\ell}$ in  $\mathbb{F}_q$  and 
        a generator matrix $G$ of $C$  such that 
        \[GG^T=  \diag(a_1,a_2,\dots,a_{k-\ell}, 0,\dots,0).
        \]  
        \item  There exist   nonzero elements $b_1,b_2,\dots, b_{n-k-\ell}$ in  $\mathbb{F}_q$  and 
        a parity-check matrix $H$ of $C$  such that 
        \[HH^T=  \diag(b_1,b_2,\dots,b_{n-k-\ell}, 0,\dots,0).
        \] 
    \end{enumerate}
\end{corollary}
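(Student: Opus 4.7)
The plan is to combine Proposition \ref{prop:DiagGram} (which for odd $q$ already delivers a generator matrix whose Gramian is diagonal) with the rank count from Proposition \ref{hull1-E}, and then use a row permutation to line up the nonzero entries at the top left. Part 2 will then follow from part 1 applied to $C^\perp$, since a parity-check matrix of $C$ is a generator matrix of its dual, and the hulls of $C$ and $C^\perp$ have the same dimension.

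For part 1, I would first invoke Proposition \ref{prop:DiagGram} to produce a generator matrix $G_0$ of $C$ with $G_0 G_0^T = \diag(c_1, c_2, \ldots, c_k)$ for some $c_i \in \mathbb{F}_q$. By Proposition \ref{hull1-E}, $\rank(G_0 G_0^T) = k - \ell$, and since the rank of a diagonal matrix equals the number of nonzero diagonal entries, exactly $k - \ell$ of the $c_i$ are nonzero. I would then pick a permutation matrix $P$ that reorders the rows of $G_0$ so as to move these nonzero entries into the first $k - \ell$ diagonal positions. Setting $G := P G_0$ gives a generator matrix of $C$ (permutation matrices are invertible), and
\[
G G^T = P G_0 G_0^T P^T = \diag(a_1, a_2, \ldots, a_{k-\ell}, 0, \ldots, 0)
\]
with each $a_i \in \mathbb{F}_q$ nonzero, as required.

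For part 2, observe that $C^\perp$ is a linear $[n, n-k]_q$ code and that $\dim(\hull(C^\perp)) = \dim(\hull(C)) = \ell$ by Proposition \ref{hull1-E}. Applying part 1 to $C^\perp$ yields a generator matrix $H$ of $C^\perp$, i.e.\ a parity-check matrix of $C$, with $H H^T = \diag(b_1, \ldots, b_{n-k-\ell}, 0, \ldots, 0)$ for nonzero $b_i$. There is no real obstacle; the diagonalizability is already done in Proposition \ref{prop:DiagGram}, and the only point requiring attention is the congruence action $P G_0 G_0^T P^T$ (rather than similarity) under a row permutation of $G_0$, which is precisely what preserves the diagonal shape and lets us order the nonzero entries first.
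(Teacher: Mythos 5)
Your proposal is correct and follows exactly the route the paper intends: the paper presents this corollary as a direct consequence of Proposition \ref{prop:DiagGram}, with the parity-check case handled by viewing a parity-check matrix as a generator matrix of $C^\perp$ and using $\hull(C)=\hull(C^\perp)$. Your explicit rank count via Proposition \ref{hull1-E} and the row-permutation (congruence) step merely fill in details the paper leaves implicit.
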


Linear codes with orthogonal or orthonormal basis    are good candidates in some applications. However, in general, an orthogonal or orthonormal basis dose not need to be exist. 
The existence of an orthonormal basis  of some Euclidean complementary dual codes has been studied in \cite{CMTQ2019}. Here,    characterization for the existence of an orthogonal basis  of  Euclidean complementary dual codes over finite fields of odd characteristic can be obtained directly from  Proposition \ref{prop:DiagGram}.
\begin{corollary}  \label{cor:oddLCD}   Let $q$ be an odd prime power and let $C$ be a linear  code  over $\mathbb{F}_q$.  Then    $C$ is Euclidean complementary dual if and only if $C$ has a Euclidean orthogonal basis. 
\end{corollary}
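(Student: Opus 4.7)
The plan is to derive this corollary almost directly from Proposition~\ref{prop:DiagGram} (equivalently, Corollary~\ref{cor:odd2}), interpreting a Euclidean orthogonal basis of $C$ as a basis $\{\boldsymbol{v}_1,\dots,\boldsymbol{v}_k\}$ with $\langle \boldsymbol{v}_i,\boldsymbol{v}_j\rangle=0$ for $i\neq j$ and $\langle \boldsymbol{v}_i,\boldsymbol{v}_i\rangle\neq 0$ for all $i$. Under this reading, the statement amounts to saying that having a generator matrix with \emph{nonsingular diagonal} Gramian is equivalent to being LCD, which is essentially built into the earlier results.

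For the forward direction, assume $C$ is Euclidean complementary dual, so $\dim(\hull(C))=0$. Applying Corollary~\ref{cor:odd2}(1) with $\ell=0$, I obtain a generator matrix $G$ of $C$ and nonzero scalars $a_1,\dots,a_k\in\mathbb{F}_q$ with
\[
GG^T=\diag(a_1,a_2,\dots,a_k).
\]
Reading off the rows $\boldsymbol{v}_1,\dots,\boldsymbol{v}_k$ of $G$, the off-diagonal entries of $GG^T$ being zero translate to $\langle \boldsymbol{v}_i,\boldsymbol{v}_j\rangle=0$ for $i\neq j$, and the diagonal entries being nonzero translate to $\langle \boldsymbol{v}_i,\boldsymbol{v}_i\rangle=a_i\neq 0$. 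Since these rows form a basis of $C$, this is the desired Euclidean orthogonal basis.

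For the converse, suppose $C$ admits a Euclidean orthogonal basis $\{\boldsymbol{v}_1,\dots,\boldsymbol{v}_k\}$. Stacking these vectors as rows gives a generator matrix $G$ of $C$ whose Gramian is
\[
GG^T=\diag\bigl(\langle \boldsymbol{v}_1,\boldsymbol{v}_1\rangle,\dots,\langle \boldsymbol{v}_k,\boldsymbol{v}_k\rangle\bigr),
\]
and by the nonvanishing of the diagonal entries this matrix is nonsingular. Proposition~\ref{hull1-E} then yields $\dim(\hull(C))=k-\rank(GG^T)=0$, so $C$ is Euclidean complementary dual.

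There is no substantial technical obstacle here, since the heavy lifting --- namely, producing a generator matrix with diagonal Gramian in odd characteristic --- has already been done in Proposition~\ref{prop:DiagGram}. The only point that requires a moment of care is the interpretation of the phrase \emph{orthogonal basis}: if one permits basis vectors with $\langle\boldsymbol{v}_i,\boldsymbol{v}_i\rangle=0$, the converse implication fails (such a $\boldsymbol{v}_i$ would lie in $\hull(C)$), so the statement implicitly requires each basis vector to be anisotropic. Once this convention is fixed, the argument reduces to the two short paragraphs above.
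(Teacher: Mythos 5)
Your proof is correct and follows exactly the route the paper intends: the paper states this corollary without proof, noting only that it "can be obtained directly from Proposition~\ref{prop:DiagGram}," and your two paragraphs are precisely that derivation (diagonal nonsingular Gramian $\Leftrightarrow$ $\ell=0$ via Proposition~\ref{hull1-E} and Corollary~\ref{cor:odd2}). Your closing remark that the converse requires each basis vector to be anisotropic --- i.e.\ that ``orthogonal basis'' must exclude isotropic vectors, which would otherwise lie in $\hull(C)$ --- is a correct and worthwhile clarification of a convention the paper leaves implicit.
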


\subsubsection{Even Characteristics}

The following results on the diagonalizability of the Gramians of  generator and  parity-check matrices of linear codes hold true for every prime powers $q$.  However, for an odd prime power  $q$, we already have stronger results  described in the previous subsection. In practice, we may assume that $q$ is a two power for the following results.

\begin{proposition}\label{diag_maxC} Let $C$ be a linear $[n,k]_q$ code such that  $\dim({\hull}(C))=\ell$.  If $\hull(C)$ is maximal  self-orthogonal in $C$, then  there exist   nonzero elements $a_1,a_2,\dots, a_{k-\ell}$ in  $\mathbb{F}_q$  and 
    a generator matrix $G$ of $C$  such that  
    \[GG^T=  \diag(a_1,a_2,\dots,a_{k-\ell}, 0,\dots,0).
    \] 
    Precisely, the Gramian of a generator matrix of a linear code $C$ whose hull is maximal  self-orthogonal in $C$ is diagonalizable. 
\end{proposition}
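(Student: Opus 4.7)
The plan is to induct on the codimension $k-\ell$ of $\hull(C)$ in $C$, mirroring the inductive structure used for Proposition~\ref{prop:DiagGram} but with the odd-characteristic ingredient (Lemma~\ref{lem:vv}) replaced by a maximality argument. In the base case $k-\ell=0$ we have $C=\hull(C)$, so $C$ is self-orthogonal and any generator matrix $G$ satisfies $GG^T=[0]$, which is vacuously of the claimed diagonal form.

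The crucial observation is that when $\hull(C)$ is maximal self-orthogonal in $C$, every $\boldsymbol{v}\in C\setminus \hull(C)$ must satisfy $\langle \boldsymbol{v},\boldsymbol{v}\rangle\ne 0$. Indeed, for any $\boldsymbol{w}\in \hull(C)\subseteq C^{\perp}$ and $\boldsymbol{v}\in C$ one has $\langle \boldsymbol{v},\boldsymbol{w}\rangle=0$; therefore if $\langle \boldsymbol{v},\boldsymbol{v}\rangle$ were also zero, then $\hull(C)+\mathbb{F}_q\boldsymbol{v}$ would be a self-orthogonal subcode of $C$ strictly containing $\hull(C)$, contradicting maximality. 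This is the structural replacement for the use of $q$ odd in Lemma~\ref{lem:vv}.

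For the inductive step, assume the result for codes of smaller hull-codimension and pick $\boldsymbol{v}_1\in C\setminus\hull(C)$. Let $C_1=\{\boldsymbol{c}\in C : \langle \boldsymbol{v}_1,\boldsymbol{c}\rangle=0\}$. Since $\langle \boldsymbol{v}_1,\boldsymbol{v}_1\rangle\ne 0$, the functional $\langle \boldsymbol{v}_1,\cdot\rangle$ is nonzero on $C$, giving $\dim C_1=k-1$ and $C=C_1\oplus \mathbb{F}_q\boldsymbol{v}_1$. One verifies three inheritance properties: (i) $\hull(C)\subseteq C_1$, which is immediate from $\hull(C)\subseteq C^\perp$; (ii) $\hull(C_1)=\hull(C)$, where the nontrivial inclusion uses $C=C_1\oplus \mathbb{F}_q\boldsymbol{v}_1$ to promote $C_1^\perp$-orthogonality of an element of $C_1$ to $C^\perp$-orthogonality; and (iii) $\hull(C_1)$ is maximal self-orthogonal in $C_1$, since any self-orthogonal $D_1$ sandwiched between $\hull(C_1)$ and $C_1\subseteq C$ would also sit between $\hull(C)$ and $C$, forcing $D_1=\hull(C)=\hull(C_1)$.

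The induction hypothesis applied to $C_1$ (with $\dim C_1-\dim \hull(C_1)=k-1-\ell$) now yields a generator matrix $G_1$ of $C_1$ whose Gramian is $\diag(a_2,\dots,a_{k-\ell},0,\dots,0)$ with $a_i\ne 0$. Since the rows of $G_1$ lie in $C_1$ and are therefore orthogonal to $\boldsymbol{v}_1$, stacking gives the generator matrix
\[G=\begin{bmatrix} \boldsymbol{v}_1 \\ G_1 \end{bmatrix}\]
of $C$ whose Gramian is block-diagonal with blocks $[\langle \boldsymbol{v}_1,\boldsymbol{v}_1\rangle]$ and $G_1G_1^T$, hence a diagonal matrix of the required shape with $a_1=\langle \boldsymbol{v}_1,\boldsymbol{v}_1\rangle\ne 0$. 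The main delicate point in the argument is the propagation of the maximality hypothesis to the subcode $C_1$; once this inheritance is in hand, the rest is a straightforward block computation.
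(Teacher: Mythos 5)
Your proof is correct, but it is organized differently from the paper's. Both arguments hinge on the same key observation, which you and the paper prove identically: maximality of $\hull(C)$ among self-orthogonal subcodes of $C$ forces $\langle \boldsymbol{v},\boldsymbol{v}\rangle\ne 0$ for every $\boldsymbol{v}\in C\setminus\hull(C)$. After that the routes diverge. The paper fixes a basis of $\hull(C)$, extends it to a basis of $C$, and runs the Gram--Schmidt process directly on the chosen complement $\langle \boldsymbol{t}_{\ell+1},\dots,\boldsymbol{t}_k\rangle$; this works because that complement meets $\hull(C)$ trivially, so every nonzero vector produced along the way is anisotropic and the divisions in Gram--Schmidt are legitimate (a point the paper leaves implicit). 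You instead induct on the codimension $k-\ell$, splitting off one anisotropic vector $\boldsymbol{v}_1$ and recursing on $C_1=\{\boldsymbol{c}\in C:\langle\boldsymbol{v}_1,\boldsymbol{c}\rangle=0\}$, exactly as in the paper's proof of Proposition~\ref{prop:DiagGram}. The price of your route is the need to verify that the hypotheses descend to $C_1$, namely $\hull(C_1)=\hull(C)$ and the persistence of maximality; you identify this as the delicate point and your verifications (i)--(iii) are all sound. What your approach buys is a uniform template: the same induction proves Proposition~\ref{prop:DiagGram} (with Lemma~\ref{lem:vv} supplying the anisotropic vector) and Proposition~\ref{diag_maxC} (with maximality supplying it), whereas the paper's Gram--Schmidt argument is shorter and avoids the inheritance lemmas but is specific to the situation where a full complement of the hull consists of anisotropic vectors.
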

\begin{proof}
    Let $\mathcal{B}=\{\boldsymbol{r}_1,\boldsymbol{r}_2,\dots,\boldsymbol{r}_{\ell}\}$ be a basis of $\hull(C)$. 
    Assume that $\hull(C)$ is maximal  self-orthogonal in $C$. 
    If there exists a codeword  $\boldsymbol{x}\in C\backslash \hull(C)$ such that $\langle \boldsymbol{x}, \boldsymbol{x} \rangle=0$, then    $\langle \boldsymbol{x},\boldsymbol{c}\rangle=0$ for all $\boldsymbol{c}\in \hull(C)$. This implies that  $\hull(C)+\langle \boldsymbol{x} \rangle$ is self-orthogonal in $C$ which is containing $\hull(C)$, a contradiction.  Hence,  $\langle \boldsymbol{x}, \boldsymbol{x}\rangle\ne 0$ for all $\boldsymbol{x}\in C\backslash \hull(C)$. 
    Extending $\mathcal{B}$ to a basis  $ \mathcal{B} \cup \{\boldsymbol{t}_{\ell+1},\boldsymbol{t}_{\ell +2},\dots,\boldsymbol{t}_{k}\}$ of $C$. 
    Using the Gram-Schmidt process, $\langle \boldsymbol{t}_{\ell+1},\boldsymbol{t}_{\ell +2},\dots,\boldsymbol{t}_{k}\rangle$ contains an orthogonal  basis,  denoted by  $\{\boldsymbol{r}_{\ell+1},\boldsymbol{r}_{\ell+2},\dots,\boldsymbol{r}_{k}\}$. Hence
    $\mathcal{ B}'=\{\boldsymbol{r}_1,\boldsymbol{r}_2,\dots,\boldsymbol{r}_{\ell},$ $\boldsymbol{r}_{\ell+1},\boldsymbol{r}_{\ell+2},\dots,\boldsymbol{r}_{k}\}$  is a basis for  $C$ such that   $\langle \boldsymbol{r}_i,\boldsymbol{r}_i \rangle \ne 0 $ for all  $ \ell+1\leq i\leq k$ and 
    $\langle \boldsymbol{r}_i,\boldsymbol{r}_j \rangle = 0$   for all  $1\leq i \leq k$ and $1\leq j \leq k$   such that $i\ne j$ or $1\leq i=j\leq \ell$.
    
    For $1\leq i\leq k-\ell$, let $a_i=\langle \boldsymbol{r}_{\ell+i},\boldsymbol{r}_{\ell+i} \rangle \ne 0 $. 
    Let  $G_1=\begin{bmatrix}
    \boldsymbol{r}_{\ell+1}\\
    \vdots\\
    \boldsymbol{r}_k
    \end{bmatrix}$, $G_2=\begin{bmatrix}
    \boldsymbol{r}_{1}\\
    \vdots\\
    \boldsymbol{r}_{\ell}
    \end{bmatrix}$ and $G=\begin{bmatrix}
    G_1\\
    G_2
    \end{bmatrix}$. 
    Then   $	G_1G^T_1
    = \diag(a_1,a_2,\dots,a_{k-\ell})$,  $G_1G^T_2 = [0]$, $G_2G^T_1=[0]$
    and $G_2G^T_2=[0]$.  
    Hence, 
    \begin{align*}
    GG^T 
    = \left[ \begin{array}{c|c}
    G_1G^T_1 &G_1G^T_2 \\ \hline
    G_2G^T_1 &G_2G^T_2
    \end{array}
    \right] 
    = \left[ \begin{array}{c|c}
    \begin{array}{ccc}
    a_1&&\\ 
    &\ddots&\\
    &&a_{k-\ell}
    \end{array} &\boldsymbol{0}\\ \hline
    \boldsymbol{0} &\boldsymbol{0}
    \end{array}
    \right]
    =\diag(a_1,a_2,\dots,a_{k-\ell}, 0,\dots,0)
    \end{align*}
    as desired.
\end{proof}

Similarly to the previous proposition, we can replace a generator matrix $G$ by a parity-check matrix $H$ of $C$ and derive the  following result.

\begin{corollary}   \label{cor:ev1} Let $C$ be a linear $[n,k]_q$ code such that  $\dim({\hull}(C))=\ell$.  If ${\hull}(C)$ is maximal  self-orthogonal in $C^\perp$, then  there exist   nonzero elements $b_1,b_2,\dots, b_{n-k-\ell}$ in  $\mathbb{F}_q$  and 
    a parity-check matrix $H$ of $C$  such that 
    \[HH^T=  \diag(b_1,b_2,\dots,b_{n-k-\ell}, 0,\dots,0).
    \] 
\end{corollary}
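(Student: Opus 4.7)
The plan is to derive this corollary from Proposition \ref{diag_maxC} by a straightforward duality argument, exploiting the fact that a parity-check matrix of $C$ is, by definition, a generator matrix of the dual code $C^{\perp}$. So the statement for parity-check matrices of $C$ is really the statement for generator matrices of $C^\perp$.

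First, I would record the three basic identifications needed: (i) $\dim(C^\perp) = n-k$; (ii) $\hull(C^\perp) = C^\perp \cap (C^\perp)^\perp = C^\perp \cap C = \hull(C)$, so $\dim(\hull(C^\perp)) = \ell$; and (iii) any parity-check matrix $H$ of $C$ is a generator matrix of $C^\perp$. With these in hand, the hypothesis that $\hull(C)$ is maximal self-orthogonal in $C^\perp$ becomes, verbatim, the statement that $\hull(C^\perp)$ is maximal self-orthogonal in $C^\perp$, which is exactly the hypothesis of Proposition \ref{diag_maxC} applied to the code $C^\perp$ in place of $C$.

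I would then invoke Proposition \ref{diag_maxC} on $C^\perp$: it yields nonzero elements $b_1, b_2, \ldots, b_{(n-k)-\ell} \in \mathbb{F}_q$ and a generator matrix $H$ of $C^\perp$ (equivalently, a parity-check matrix of $C$) such that
\[
HH^T = \diag(b_1, b_2, \ldots, b_{n-k-\ell}, 0, \ldots, 0),
\]
which is precisely the claim. By the convention already stated in Corollary \ref{cor:G1G2}, the case $n-k-\ell = 0$ is handled automatically (the list of $b_i$ is empty and $HH^T = [0]$).

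The main ``obstacle'' here is essentially cosmetic: one only has to verify that the maximality condition transfers cleanly along the duality $C \leftrightarrow C^\perp$, which it does because the hull is a symmetric object in $C$ and $C^\perp$ and the inner product used to define ``self-orthogonal in $C^\perp$'' is the same Euclidean form. No additional combinatorial or field-characteristic hypothesis is needed, so the result follows for every prime power $q$.
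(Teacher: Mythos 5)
Your proposal is correct and matches the paper's (implicitly stated) argument: the paper likewise obtains this corollary by replacing the generator matrix with a parity-check matrix, i.e., by applying Proposition \ref{diag_maxC} to $C^\perp$ using the identities $\hull(C^\perp)=\hull(C)$ and $\dim(C^\perp)=n-k$. Your explicit verification that the maximality hypothesis transfers verbatim under this duality is exactly the point the paper leaves to the reader.
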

In the case where $C$ is maximal  self-orthogonal, then $\hull(C)=C$  is maximal  self-orthogonal in $C^\perp$. Hence, we have the following corollary. 
\begin{corollary} \label{cor:maxC} Let $C$ be a linear $[n,k]_q$ code.  If $C$ is maximal  self-orthogonal, then  there exist   nonzero elements $b_1,b_2,\dots, b_{n-2k}$ in  $\mathbb{F}_q$  and 
    a parity-check matrix $H$ of $C$  whose Gramian is 
    \[HH^T=  \diag(b_1,b_2,\dots,b_{n-2k}, 0,\dots,0).
    \] 
\end{corollary}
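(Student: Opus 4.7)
The plan is to derive Corollary \ref{cor:maxC} as a direct application of Corollary \ref{cor:ev1}, so the main work is just to verify the two hypotheses of that corollary: that $\hull(C) = C$ and that $\hull(C)$ is maximal self-orthogonal in $C^\perp$.

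First I would observe that because $C$ is self-orthogonal, $C \subseteq C^\perp$, and therefore $\hull(C) = C \cap C^\perp = C$. In particular, $\dim(\hull(C)) = k$, so substituting $\ell = k$ into the statement of Corollary \ref{cor:ev1} yields the diagonal form $\diag(b_1, b_2, \ldots, b_{n-2k}, 0, \ldots, 0)$ with the right number of nonzero diagonal entries.

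Next I would check that $\hull(C) = C$ is maximal self-orthogonal in $C^\perp$. This is almost tautological: any self-orthogonal code $D$ satisfies $D \subseteq D^\perp$, and if $C \subseteq D$ then $D^\perp \subseteq C^\perp$, so $D \subseteq C^\perp$ automatically. Thus a self-orthogonal code properly containing $\hull(C) = C$ inside $C^\perp$ would be a self-orthogonal code properly containing $C$, contradicting the maximality of $C$ as a self-orthogonal code. Hence $\hull(C)$ is maximal self-orthogonal in $C^\perp$.

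With both hypotheses verified, Corollary \ref{cor:ev1} applies directly and produces the required parity-check matrix $H$ of $C$ with $HH^T = \diag(b_1, b_2, \ldots, b_{n-2k}, 0, \ldots, 0)$ for some nonzero $b_1, \ldots, b_{n-2k} \in \mathbb{F}_q$. There is no real obstacle here; the only subtlety worth stating explicitly is the observation that maximal self-orthogonality of $C$ inside $\mathbb{F}_q^n$ forces maximal self-orthogonality of $\hull(C) = C$ inside $C^\perp$, which is what bridges the present corollary to the previous one.
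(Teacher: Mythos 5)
Your proposal is correct and follows essentially the same route as the paper, which derives the corollary by noting that maximal self-orthogonality of $C$ gives $\hull(C)=C$ maximal self-orthogonal in $C^\perp$ and then invoking Corollary \ref{cor:ev1}. You merely spell out (correctly) the small step the paper leaves implicit, namely that any self-orthogonal code containing $C$ automatically lies in $C^\perp$, so maximality in $\mathbb{F}_q^n$ transfers to maximality in $C^\perp$.
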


\begin{lemma}\label{lemMaxOS}
    Let $C$ be a linear $[n,k]_q$ code such that  $\dim(\hull(C))=\ell$.    Then the following statements hold.
    \begin{enumerate}[$1)$]
        \item  If $k-\ell\leq 1$, then $\hull(C)$ is maximal self-orthogonal in $C$.
        \item If $n-k-\ell\leq 1$, then $\hull(C)$ is maximal self-orthogonal in $C^\perp$.
    \end{enumerate}
    
\end{lemma}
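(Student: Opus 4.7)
The plan is straightforward: unpack what ``maximal self-orthogonal in $C$'' means (as used in Proposition \ref{diag_maxC}), and then reduce each part to a one-line dimension count. Note first that $\hull(C)\subseteq C$ is automatically self-orthogonal, since any $\boldsymbol{u},\boldsymbol{v}\in \hull(C)=C\cap C^{\perp}$ have $\boldsymbol{u}\in C$ and $\boldsymbol{v}\in C^{\perp}$, so $\langle\boldsymbol{u},\boldsymbol{v}\rangle=0$. Hence to prove part $1)$ it suffices to show that no self-orthogonal $D$ with $\hull(C)\subsetneq D\subseteq C$ exists, under the hypothesis $k-\ell\leq 1$.

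For part $1)$ I would split on the value of $k-\ell\in\{0,1\}$. When $k-\ell=0$, we have $\hull(C)=C$, so a strictly larger subcode of $C$ does not exist and the claim is vacuous. When $k-\ell=1$, suppose for contradiction that $D$ is self-orthogonal with $\hull(C)\subsetneq D\subseteq C$. Dimension counting gives $\ell<\dim(D)\leq k=\ell+1$, forcing $\dim(D)=k$ and hence $D=C$. But then $C\subseteq C^{\perp}$, so $\hull(C)=C\cap C^{\perp}=C$, which contradicts $\dim(\hull(C))=k-1<k$. This rules out any strict self-orthogonal extension inside $C$, establishing part $1)$.

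For part $2)$ I would apply part $1)$ to the dual code $C^{\perp}$, using the two standard identities $\dim(C^{\perp})=n-k$ and $\hull(C^{\perp})=C^{\perp}\cap (C^{\perp})^{\perp}=C^{\perp}\cap C=\hull(C)$ (the same identity already exploited in the paper, e.g.\ at the end of the proof of Corollary \ref{cor:G1G2}). These rewrite the hypothesis $n-k-\ell\leq 1$ as $\dim(C^{\perp})-\dim(\hull(C^{\perp}))\leq 1$, which is exactly the hypothesis of part $1)$ for $C^{\perp}$. Applying part $1)$ to $C^{\perp}$ then yields that $\hull(C^{\perp})=\hull(C)$ is maximal self-orthogonal in $C^{\perp}$.

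I do not foresee any real obstacle: the whole lemma is essentially a two-line dimension argument combined with the trivial fact that an ambient code that is itself self-orthogonal must coincide with its own hull. The only point worth being careful about is making sure the reader reads ``self-orthogonal in $C$'' as ``a self-orthogonal subspace of $\mathbb{F}_q^n$ that happens to be contained in $C$,'' which is the interpretation implicitly used in the proof of Proposition \ref{diag_maxC}.
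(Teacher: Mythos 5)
Your proof is correct and follows essentially the same route as the paper: both treat $k-\ell=0$ as immediate, handle $k-\ell=1$ by a contradiction that exploits the codimension-one situation, and obtain part $2)$ by applying part $1)$ to $C^{\perp}$ via $\hull(C^{\perp})=\hull(C)$. The only (cosmetic) difference is that the paper derives the contradiction by showing any $\boldsymbol{v}\in C\setminus\hull(C)$ with $\langle\boldsymbol{v},\boldsymbol{v}\rangle=0$ would lie in $\hull(C)$, whereas you dimension-count a hypothetical self-orthogonal $D$ strictly between $\hull(C)$ and $C$ and force $D=C$; both reduce to the same observation.
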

\begin{proof}
    To prove 1), assume that $k-\ell \leq 1$.   If $k-\ell=0$, then we have $k=\ell$ which means $\hull(C)=C$.  Hence,  $\hull(C)$ is a self-orthogonal in $C$, i.e., $C$ is maximal self-orthogonal in $C$. Assume that  $k-\ell=1$. Then there exists $\boldsymbol{v}\in C\backslash \hull(C)$. Suppose that $\langle \boldsymbol{v},\boldsymbol{v} \rangle=0$. Then $C= \langle \boldsymbol{v} \rangle+\hull(C)$. Since $\langle \boldsymbol{v},\boldsymbol{c} \rangle=0$ for all $\boldsymbol{c}\in C$,  we have $\boldsymbol{v}\in \hull (C)$ which is  a contradiction. Hence, $\langle \boldsymbol{v},\boldsymbol{v}\rangle\neq 0$. Therefore, $\hull(C)$ is maximal self-orthogonal in $C$.
    
    By replacing $C$ with  $C^\perp$ in 1), the result of  2) follows similarly. 
\end{proof}

\begin{corollary}    \label{cor:ev2} Let $C$ be a linear $[n,k]_q$ code such that  $\dim(\hull(C))=\ell$.    If $q$ is even, then   the following statements hold.
    \begin{enumerate}[$1)$]
        \item   $k-\ell\leq 1$ if and only if $\hull(C)$ is maximal self-orthogonal in $C$.
        \item $n-k-\ell\leq 1$ if and only if  $\hull(C)$ is maximal self-orthogonal in $C^\perp$.
    \end{enumerate}
    
\end{corollary}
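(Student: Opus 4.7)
The plan is to observe that Lemma~\ref{lemMaxOS} already supplies the implication $(\Leftarrow)$ in both parts, so only the reverse direction needs to be proved, and this is where the assumption of even characteristic will be used.

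For part~1, I would set
\[K=\{\boldsymbol{v}\in C\mid \langle \boldsymbol{v},\boldsymbol{v}\rangle=0\}.\]
The key observation is that in characteristic $2$, the identity $\langle \boldsymbol{v},\boldsymbol{v}\rangle=\sum_{i=1}^n v_i^2=\bigl(\sum_{i=1}^n v_i\bigr)^2$ holds, so $K$ is precisely the kernel of the $\mathbb{F}_q$-linear form $\boldsymbol{v}\mapsto\sum_i v_i$ restricted to $C$. In particular, $K$ is an $\mathbb{F}_q$-subspace of $C$ and the codimension of $K$ in $C$ is at most $1$.

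Next I would identify $K$ with $\hull(C)$ under the maximality hypothesis. The inclusion $\hull(C)\subseteq K$ is immediate, since $\hull(C)\subseteq C^\perp$ forces $\langle \boldsymbol{v},\boldsymbol{v}\rangle=0$ for all $\boldsymbol{v}\in\hull(C)$. For the reverse inclusion, given $\boldsymbol{v}\in K$, the space $\hull(C)+\langle \boldsymbol{v}\rangle$ is self-orthogonal in $C$: indeed $\langle \boldsymbol{v},\boldsymbol{v}\rangle=0$ by definition of $K$, and $\langle \boldsymbol{v},\boldsymbol{c}\rangle=0$ for every $\boldsymbol{c}\in\hull(C)$ because $\hull(C)\subseteq C^\perp$ and $\boldsymbol{v}\in C$. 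Maximality of $\hull(C)$ as a self-orthogonal subspace of $C$ then forces $\boldsymbol{v}\in\hull(C)$, giving $K=\hull(C)$. Combining this with the codimension bound yields $k-\ell=\dim(C)-\dim(K)\leq 1$, as required.

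For part~2, the plan is to apply part~1 to the dual code $C^\perp$, using the equality $\hull(C^\perp)=C^\perp\cap C=\hull(C)$, so that $\dim(\hull(C^\perp))=\ell$ and $\dim(C^\perp)=n-k$. Then $n-k-\ell\leq 1$ becomes the conclusion of part~1 applied to $C^\perp$, and maximality of $\hull(C)$ in $C^\perp$ is exactly maximality of $\hull(C^\perp)$ in $C^\perp$.

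The only delicate point is the Frobenius identity $\sum v_i^2=(\sum v_i)^2$ that converts the quadratic condition $\langle \boldsymbol{v},\boldsymbol{v}\rangle=0$ into the kernel of an $\mathbb{F}_q$-linear form; this is what fails in odd characteristic and is the essential reason the corollary is characteristic-specific. Once this observation is in place, the rest of the argument is routine and short.
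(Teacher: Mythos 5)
Your proof is correct, but it takes a genuinely different route from the paper for the key direction (maximality $\Rightarrow k-\ell\leq 1$). The paper argues by contrapositive: if $k-\ell>1$, it takes two vectors $\boldsymbol{v}_1,\boldsymbol{v}_2$ that are independent modulo $\hull(C)$, notes that maximality would force $\langle \boldsymbol{v}_i,\boldsymbol{v}_i\rangle\ne 0$, and then uses the fact that every element of $\mathbb{F}_q$ is a square in characteristic $2$ to choose $a$ with $a^2=\langle \boldsymbol{v}_1,\boldsymbol{v}_1\rangle/\langle \boldsymbol{v}_2,\boldsymbol{v}_2\rangle$, producing the isotropic vector $\boldsymbol{v}_1+a\boldsymbol{v}_2\notin\hull(C)$ and hence a strictly larger self-orthogonal subcode, a contradiction. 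You instead prove the implication directly via the Frobenius identity $\langle \boldsymbol{v},\boldsymbol{v}\rangle=\bigl(\sum_i v_i\bigr)^2$, which shows the isotropic vectors of $C$ form the kernel $K$ of a linear form on $C$, hence a subspace of codimension at most $1$; maximality then pins down $K=\hull(C)$ and the bound follows. Both arguments use characteristic $2$ in an essential way (surjectivity of squaring in the paper, additivity of squaring in yours), but your version avoids the explicit two-vector construction, explains structurally why the bound is exactly $1$, and yields the extra information that $\hull(C)\subseteq K\subseteq C$ with $\dim(C/K)\leq 1$ holds for every binary-characteristic code, with maximality equivalent to $K=\hull(C)$. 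Your reduction of part~2 to part~1 via $\hull(C^\perp)=\hull(C)$ matches the paper's. One small point of care: when verifying that $\hull(C)+\langle\boldsymbol{v}\rangle$ is self-orthogonal you should (as you implicitly do) check the cross terms $\langle\boldsymbol{c},\boldsymbol{v}\rangle$ and the hull-hull terms, not just $\langle\boldsymbol{v},\boldsymbol{v}\rangle$; all vanish because $\hull(C)\subseteq C^\perp$ and $\boldsymbol{v}\in C$, so the argument is complete.
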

\begin{proof} Assume that $q$ is even. The sufficient part follows from Lemma \ref{lemMaxOS}. For necessity, assume that $k-\ell >1$.
    Then there exist two linearly independent elements $\boldsymbol{v}_1$ and $\boldsymbol{v}_2$ in $C\setminus \hull(C)$.  Then  ${\langle \boldsymbol{v}_1,\boldsymbol{v}_1\rangle}\ne 0$ and ${\langle \boldsymbol{v}_2,\boldsymbol{v}_2\rangle}\ne 0$.  Since $q$ is even,  every element in $\mathbb{F}_q$ is square. Let  $a$ be an element in $\mathbb{F}_q$ such that  $a^2=\frac{\langle \boldsymbol{v}_1,\boldsymbol{v}_1\rangle}{\langle \boldsymbol{v}_2,\boldsymbol{v}_2 \rangle}$. Then 
    $\langle \boldsymbol{v}_1+a\boldsymbol{v}_2,\boldsymbol{v}_1+a\boldsymbol{v}_2 \rangle=\langle \boldsymbol{v}_1,\boldsymbol{v}_1\rangle + 2a\langle \boldsymbol{v}_1,\boldsymbol{v}_2\rangle +a^2\langle  \boldsymbol{v}_2, \boldsymbol{v}_2\rangle  = 2 \langle \boldsymbol{v}_1,\boldsymbol{v}_1\rangle =0$ and $\boldsymbol{v}_1+a\boldsymbol{v}_2\in C\setminus \hull(C)$.   Hence, $ \hull(C)+\langle \boldsymbol{v}_1+a\boldsymbol{v}_2\rangle$ is Euclidean self-orthogonal and  $\hull(C)\subsetneq \hull(C)+\langle \boldsymbol{v}_1+a\boldsymbol{v}_2\rangle \subseteq C$.  Therefore, $\hull(C)$ is not maximal self-orthogonal in $C$.
    
    The second statement follows immediately from 1).
\end{proof}

\begin{corollary}   \label{cor:ev3} Let $C$ be a non-zero  linear code  of length $n$ over $\mathbb{F}_q$. If $q$ is even and $\dim(C)-\dim(\hull(C))\leq 1$, then the Gramian of a generator matrix of  $C$  is diagonalizable. 
\end{corollary}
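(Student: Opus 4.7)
The plan is to assemble Corollary \ref{cor:ev3} directly from two results already established in the paper, without needing any new computation. Write $k=\dim(C)$ and $\ell=\dim(\hull(C))$, and note that the hypothesis is precisely $k-\ell\le 1$.

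First I would invoke Lemma \ref{lemMaxOS}(1) (or equivalently the easy direction of Corollary \ref{cor:ev2}(1)), which tells us that $k-\ell\le 1$ forces $\hull(C)$ to be maximal self-orthogonal in $C$. This step is characteristic-free, so the parity assumption on $q$ is not actually used here; it is included in the statement only to distinguish the result from the stronger odd-characteristic result (Proposition \ref{prop:DiagGram}), where no restriction on $k-\ell$ is needed.

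Second, with maximal self-orthogonality of $\hull(C)$ inside $C$ in hand, I would apply Proposition \ref{diag_maxC} directly: it produces nonzero scalars $a_1,a_2,\dots,a_{k-\ell}\in\mathbb{F}_q$ and a generator matrix $G$ of $C$ whose Gramian satisfies
\[GG^T=\diag(a_1,a_2,\dots,a_{k-\ell},0,\dots,0),\]
which is diagonal (in fact, in the two sub-cases, this is either the zero matrix when $k=\ell$ or a matrix with a single nonzero diagonal entry when $k-\ell=1$). This exhibits a generator matrix of $C$ with diagonal Gramian, which is exactly the sense of diagonalizability used throughout Section \ref{sec3}.2, completing the proof.

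There is essentially no obstacle: the work has already been done in Lemma \ref{lemMaxOS} and Proposition \ref{diag_maxC}. The only conceptual point worth spelling out in the write-up is \emph{why} the even-characteristic hypothesis appears at all, namely that in odd characteristic Proposition \ref{prop:DiagGram} supersedes the corollary, so Corollary \ref{cor:ev3} is best read as the even-characteristic counterpart of that stronger result under the additional hypothesis $k-\ell\le 1$.
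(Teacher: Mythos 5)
Your proposal is correct and follows exactly the route the paper intends: Lemma \ref{lemMaxOS}(1) gives that $k-\ell\le 1$ forces $\hull(C)$ to be maximal self-orthogonal in $C$, and Proposition \ref{diag_maxC} then yields a generator matrix with diagonal Gramian. Your observation that the even-characteristic hypothesis is not actually used in the argument, but only serves to delimit the corollary against the stronger odd-characteristic result of Proposition \ref{prop:DiagGram}, is also accurate.
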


The diagonalizabilty studied above will be useful in the applications in Section \ref{sec6}.

\section{Hermitian Hulls of Linear Codes} \label{sec5}

For a prime power  $q$,  the \textit{Hermitian inner product} of $\boldsymbol{u}=(u_1,u_2,\ldots,u_n)$ and $\boldsymbol{v}=(v_1,v_2,\ldots,v_n)$ in $\mathbb{F}_{q^2}^n$ is defined to be  
\[\langle \boldsymbol{u}, \boldsymbol{v} \rangle_H:=\sum_{i=1}^{n} u_i {v_i^{q}}.\]
The {\em Hermitian dual}   $C^{\perp_H}$  of $C$ is defined to be the set 
\[C^{\perp_H}=\{ \boldsymbol{v} \in  \mathbb{F}_{q^2}^n\mid  \langle \boldsymbol{c}, \boldsymbol{v} \rangle_H =0 \text{ for all }  \boldsymbol{c}\in C\}.\]
The {\em Hermitian hull} of a code $C$ is $C\cap {C}^{\perp_H}$ and denote by $\hull_H(C)=C\cap C^{\perp_H}$. 
A code $C$ is said to be {\em Hermitian self-orthogonal} if $C\subseteq C^{\perp_H}$ and it is said to be {\em Hermitian complementary dual} if  $\hull_H(C)=\{\boldsymbol{0}\}$.    Clearly,  $C$ is  {Hermitian self-orthogonal} if $\hull_H(C)=C$. For an $m\times n $ matrix $A=[a_{ij}]$, denote by $A^\dagger=[a_{ji}^q]$ the conjugate transpose  of $A$. For each $\boldsymbol{v}=(v_1,v_2,\dots,v_n)\in \mathbb{F}_{q^2}^n$, denote by $\overline{\boldsymbol{v}}=(v_1^q,v_2^q,\dots , v_n^q)$ the conjugate vector of $\boldsymbol{v}$.

In this section, a discussion on Hermitian hulls of linear codes is given.  We note that most of  the results in this section can be obtained using the arguments analogous to those in Section \ref{sec3}.  Therefore,  the proofs for those results will be omitted. Some proofs are  provided if they are  required and different from those in Section \ref{sec3}.  For convenience, the theorem numbers are given in the form  \ref{sec3}.$N'$  if it corresponds to  \ref{sec3}.$N$ in Section \ref{sec3}.

The Hermitian  hull dimension of linear codes has been characterized  in \cite{GJG2018}.  Here, we provide an alternative characterizations of the Hermitian hull dimension of linear codes.

\renewcommand{\theproposition}{\ref{rank_eq}$'$}
\begin{proposition}\label{rank_eqH} Let $C$ be a linear  $[n,k]_{q^2}$  code and let $\ell$ be a non-negative integer. Then the following statements are equivalent.     
    \begin{enumerate}[$1)$]
        \item  $\dim(\hull_H(C))=\ell$.
        \item  $\rank(GG^\dagger)=k-\ell$  for every generator matrix $G$ of $C$. 
        \item $\rank(G_1G_2^\dagger)=k-\ell$  for all generator matrices $G_1$  and $G_2$ of $C$. 
        \item  $\rank(HH^\dagger)=n-k-\ell$  for every  parity-check matrix $H$ of $C$. 
        \item $\rank(H_1H_2^\dagger)=n-k-\ell$  for all parity-check matrices $H_1$  and $H_2$ of $C$. 
    \end{enumerate}
\end{proposition}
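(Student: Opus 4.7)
The plan is to imitate the proof of Proposition \ref{rank_eq} with the transpose replaced by the conjugate transpose $\dagger$. The equivalences $1) \Leftrightarrow 2)$ and $1) \Leftrightarrow 4)$ are the Hermitian analogue of Proposition \ref{hull1-E}, which is established in \cite{GJG2018}. Hence I only need to prove the equivalences $2) \Leftrightarrow 3)$ and $4) \Leftrightarrow 5)$; by symmetry it suffices to carry out $2) \Leftrightarrow 3)$ and then repeat the argument with a parity-check matrix in place of a generator matrix.

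For the nontrivial direction $2) \Rightarrow 3)$, I would fix a generator matrix $G$ of $C$ satisfying $\rank(GG^\dagger) = k-\ell$ and take arbitrary generator matrices $G_1, G_2$ of $C$. Since the rows of $G_1, G_2, G$ each form a basis of $C$, there are invertible $k \times k$ matrices $E_1, E_2$ over $\mathbb{F}_{q^2}$ with $G_1 = E_1 G$ and $G_2 = E_2 G$. Using the identity $(AB)^\dagger = B^\dagger A^\dagger$, which holds because the Frobenius map $x \mapsto x^q$ is a field automorphism of $\mathbb{F}_{q^2}$, I obtain
\[
G_1 G_2^\dagger = (E_1 G)(E_2 G)^\dagger = E_1 (G G^\dagger) E_2^\dagger.
\]
The conjugate transpose preserves invertibility since $(E_2^{-1})^\dagger = (E_2^\dagger)^{-1}$, so $E_1$ and $E_2^\dagger$ are both invertible and hence
\[
\rank(G_1 G_2^\dagger) = \rank(G G^\dagger) = k-\ell.
\]
The reverse direction $3) \Rightarrow 2)$ is immediate: take $G_1 = G_2 = G$ for any generator matrix $G$ and apply the hypothesis.

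The equivalence $4) \Leftrightarrow 5)$ follows by the same argument, replacing the generator matrices $G, G_1, G_2$ by parity-check matrices $H, H_1, H_2$ of $C$ (equivalently, generator matrices of $C^{\perp_H}$) and using $\dim(\hull_H(C)) = \dim(\hull_H(C^{\perp_H}))$.

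The main point to verify is purely formal: that the Hermitian conjugate transpose behaves like an involutive anti-homomorphism on matrices and that it preserves invertibility. There is no real obstacle beyond checking that the Frobenius-twisted transpose interacts with matrix multiplication in the expected way, so the argument goes through essentially verbatim from the Euclidean case.
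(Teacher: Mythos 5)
Your proof is correct and follows exactly the route the paper intends: the paper omits the proof of this proposition, stating that it follows by the arguments of Proposition \ref{rank_eq} with the transpose replaced by the conjugate transpose, which is precisely what you carry out (including the key identity $G_1G_2^\dagger = E_1(GG^\dagger)E_2^\dagger$ and the preservation of rank under multiplication by the invertible matrices $E_1$ and $E_2^\dagger$).
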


From  Proposition \ref{rank_eqH}, the following  characterizations can be obtained directly. 

\renewcommand{\thecorollary}{\ref{cor:G1G2}$'$}
\begin{corollary}\label{cor:G1G2H} Let $C$ be a linear  $[n,k]_{q^2}$  code and let $\ell$ be a non-negative integer. Then the following statements are equivalent. 
    
    \begin{enumerate}[$1)$]
        \item  $\dim(\hull_H(C))=\ell$.
        \item There exist nonzero elements $a_1,a_2,\dots,a_{k-\ell}$ in $\mathbb{F}_{q^2}$ and  generator matrices $G_1$  and $G_2$ of $C$  such that \[G_1G_2^\dagger =\diag(a_1,a_2,\dots,a_{k-\ell}, 0,\dots,0).\]
        \item There exist nonzero elements $b_1,b_2,\dots,b_{n-k-\ell}$ in $\mathbb{F}_{q^2}$ and  parity-check matrices $H_1$  and $H_2$ of $C$  such that \[H_1H_2^\dagger=\diag(b_1,b_2,\dots,b_{n-k-\ell}, 0,\dots,0).\]
    \end{enumerate} 
\end{corollary}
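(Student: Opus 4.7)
The plan is to mirror the proof of Corollary~\ref{cor:G1G2} step by step, replacing the Euclidean pairing and ordinary transpose with the Hermitian pairing and the conjugate transpose $\dagger$, and invoking Proposition~\ref{rank_eqH} in place of Proposition~\ref{rank_eq}. Since all three conditions are equivalences and $\hull_H(C)$ and $\hull_H(C^{\perp_H})$ behave symmetrically under the Hermitian analogue of the duality observation, it suffices to prove $1)\Leftrightarrow 2)$ carefully; the equivalence $1)\Leftrightarrow 3)$ then follows by applying the same argument to $C^{\perp_H}$.

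For the direction $1)\Rightarrow 2)$, I would fix any generator matrix $G$ of $C$. By Proposition~\ref{rank_eqH} the Gramian $GG^\dagger$ has rank $k-\ell$. Using Smith-normal-form style elementary row and column reductions over $\mathbb{F}_{q^2}$, one can choose invertible $k\times k$ matrices $P$ and $R$ such that
\[ P(GG^\dagger) R = \diag(a_1,a_2,\dots,a_{k-\ell},0,\dots,0) \]
with nonzero $a_i\in\mathbb{F}_{q^2}$. The key adaptation is to apply the column operations in conjugate-transposed form: set $Q$ to be the invertible matrix with $Q^\dagger = R$, which exists because the map $Q\mapsto Q^\dagger$ is a bijection on $\mathrm{GL}_k(\mathbb{F}_{q^2})$. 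Then with $G_1 := PG$ and $G_2 := QG$, both of which are generator matrices of $C$, one computes
\[ G_1 G_2^\dagger = (PG)(QG)^\dagger = P G G^\dagger Q^\dagger = P(GG^\dagger)R = \diag(a_1,\dots,a_{k-\ell},0,\dots,0), \]
as required.

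For the direction $2)\Rightarrow 1)$, the hypothesis immediately gives $\rank(G_1 G_2^\dagger) = k-\ell$, and condition $3)$ of Proposition~\ref{rank_eqH} then yields $\dim(\hull_H(C))=\ell$. The equivalence $1)\Leftrightarrow 3)$ is proved by the same argument applied to a parity-check matrix $H$ of $C$ (equivalently, a generator matrix of $C^{\perp_H}$), using parts $4)$ and $5)$ of Proposition~\ref{rank_eqH}.

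I do not anticipate any substantive obstacle: the entire argument is a routine transcription of the Euclidean proof. The only point that deserves explicit mention, and which is the closest thing to a subtlety, is the use of the bijection $Q\leftrightarrow Q^\dagger$ on $\mathrm{GL}_k(\mathbb{F}_{q^2})$, which is what allows one to absorb a column reduction on the right of $GG^\dagger$ into a legitimate choice of a second generator matrix $G_2 = QG$ for $C$.
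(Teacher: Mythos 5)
Your proof is correct and follows essentially the same route the paper intends: the paper omits the argument, stating only that the corollary follows directly from Proposition~\ref{rank_eqH} by the same reasoning as Corollary~\ref{cor:G1G2}, and your transcription of that Euclidean proof is faithful. Your one explicit adaptation --- choosing $Q$ with $Q^\dagger=R$ so that the right-hand column reduction is absorbed into a legitimate second generator matrix $G_2=QG$ --- is exactly the point that needs to be checked in the Hermitian setting, and it is handled correctly.
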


For an odd prime power $q$, we  show that  $GG^\dagger$  is always diagonalizable for every  generator matrix $G$ of a linear code over $\mathbb{F}_{q^2}$. 
We begin with the following useful lemma. 
\renewcommand{\thelemma}{\ref{lem:vv}$'$}
\begin{lemma}\label{lem:vvH} Let $C$ be a   linear code  of length $n$ over $\mathbb{F}_{q^2}$. If $q$ is odd and $C$ is not Hermitian self-orthogonal, then there exists an element $\boldsymbol{v}\in  C$ such that $\langle \boldsymbol{v},\boldsymbol{v}\rangle_H\ne 0$. 
\end{lemma}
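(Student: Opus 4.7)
The plan is to mimic the proof of Lemma \ref{lem:vv}, but to account for the fact that the Hermitian pairing is sesquilinear rather than bilinear. Since $C$ is not Hermitian self-orthogonal, there exist $\boldsymbol{u},\boldsymbol{w}\in C$ with $\alpha:=\langle\boldsymbol{u},\boldsymbol{w}\rangle_H\neq 0$. If $\langle\boldsymbol{u},\boldsymbol{u}\rangle_H\neq 0$ or $\langle\boldsymbol{w},\boldsymbol{w}\rangle_H\neq 0$, we are already done. The interesting case is when both self-pairings vanish, and here the Euclidean recipe $\boldsymbol{v}=\boldsymbol{u}+\boldsymbol{w}$ no longer works directly, because
\[
\langle\boldsymbol{u}+\boldsymbol{w},\boldsymbol{u}+\boldsymbol{w}\rangle_H=\langle\boldsymbol{u},\boldsymbol{w}\rangle_H+\langle\boldsymbol{w},\boldsymbol{u}\rangle_H=\alpha+\alpha^q,
\]
which is the relative trace $\mathrm{Tr}_{q^2/q}(\alpha)$ and may well be zero.

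The fix I would use is to choose a scaling so that the two cross terms in the expansion become equal rather than conjugates. Specifically, I would set $\boldsymbol{v}=\boldsymbol{u}+\alpha\boldsymbol{w}$ and expand using the sesquilinearity (linear in the first slot, conjugate-linear in the second):
\[
\langle\boldsymbol{v},\boldsymbol{v}\rangle_H
= \langle\boldsymbol{u},\boldsymbol{u}\rangle_H+\alpha^q\langle\boldsymbol{u},\boldsymbol{w}\rangle_H+\alpha\langle\boldsymbol{w},\boldsymbol{u}\rangle_H+\alpha^{q+1}\langle\boldsymbol{w},\boldsymbol{w}\rangle_H.
\]
Using the assumption $\langle\boldsymbol{u},\boldsymbol{u}\rangle_H=\langle\boldsymbol{w},\boldsymbol{w}\rangle_H=0$ together with $\langle\boldsymbol{w},\boldsymbol{u}\rangle_H=\alpha^q$, the middle two terms collapse to $\alpha^q\cdot\alpha+\alpha\cdot\alpha^q=2\alpha^{q+1}$.

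Since $q$ is odd we have $2\neq 0$ in $\mathbb{F}_{q^2}$, and since $\alpha\neq 0$ we have $\alpha^{q+1}\neq 0$, so $\langle\boldsymbol{v},\boldsymbol{v}\rangle_H=2\alpha^{q+1}\neq 0$. This $\boldsymbol{v}$ is an $\mathbb{F}_{q^2}$-linear combination of $\boldsymbol{u},\boldsymbol{w}\in C$, hence $\boldsymbol{v}\in C$, as required.

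The main thing to watch is the bookkeeping of the conjugation under the Hermitian form: the naive analogue $\boldsymbol{u}+\boldsymbol{w}$ produces $\mathrm{Tr}_{q^2/q}(\alpha)$, which is $\mathbb{F}_q$-valued and can vanish, so one has to weight $\boldsymbol{w}$ by $\alpha$ (or any $c\in\mathbb{F}_{q^2}$ with $c\alpha^q=\alpha^q c\neq$ its negative) to force the two cross terms to coincide and produce the factor of $2$. Once that observation is in hand, the verification is the same one-line computation as in the Euclidean case.
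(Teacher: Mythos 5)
Your proposal is correct and is essentially identical to the paper's own proof: the paper also takes $\boldsymbol{v}=\boldsymbol{u}+\langle\boldsymbol{u},\boldsymbol{w}\rangle_H\,\boldsymbol{w}$ in the case where both self-pairings vanish and arrives at $\langle\boldsymbol{v},\boldsymbol{v}\rangle_H=2\langle\boldsymbol{u},\boldsymbol{w}\rangle_H^{q+1}\ne 0$. Your added remark explaining why the unscaled choice $\boldsymbol{u}+\boldsymbol{w}$ fails (it yields the trace $\alpha+\alpha^q$, which can vanish) is a helpful observation the paper leaves implicit.
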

\begin{proof} Assume that  $q$ is an odd prime power and $C$ is not Hermitian  self-orthogonal.   Then there exist  $\boldsymbol{u}$ and $\boldsymbol{w}$  in $ C$ such that $\langle \boldsymbol{u},\boldsymbol{w}\rangle_H\ne 0$. If $\langle \boldsymbol{u},\boldsymbol{u}\rangle_H\ne 0$ or $\langle \boldsymbol{w},\boldsymbol{w}\rangle_H\ne 0$, we are done. Assume that $\langle \boldsymbol{u},\boldsymbol{u}\rangle_H= 0$ and $\langle \boldsymbol{w},\boldsymbol{w}\rangle_H= 0$. Let $  \boldsymbol{v}=\boldsymbol{u}+\langle \boldsymbol{u},\boldsymbol{w}\rangle_H\boldsymbol{w}$. Since $q$ is odd, we have

    \begin{align*}
    \langle \boldsymbol{v},\boldsymbol{v}\rangle_H
    &= \langle  \boldsymbol{u},\boldsymbol{u}\rangle_H
    +  \langle \boldsymbol{u},\boldsymbol{w}\rangle_H ^q \langle \boldsymbol{u},\boldsymbol{w}\rangle_H 
    +  \langle \boldsymbol{u},\boldsymbol{w}\rangle_H  \langle \boldsymbol{w},\boldsymbol{u}\rangle_H 
    +\langle \boldsymbol{u},\boldsymbol{w}\rangle_H^{q+1}\langle \boldsymbol{w},\boldsymbol{w}\rangle _H\\
    &=  \langle \boldsymbol{u},\boldsymbol{w}\rangle_H ^q \langle \boldsymbol{u},\boldsymbol{w}\rangle_H 
    +  \langle \boldsymbol{u},\boldsymbol{w}\rangle_H  \langle \boldsymbol{u},\boldsymbol{w}\rangle_H ^q
    \\
    & =2  \langle \boldsymbol{u},\boldsymbol{w}\rangle_H ^q \langle \boldsymbol{u},\boldsymbol{w}\rangle_H \\
    &\ne 0\end{align*} as desired. 
\end{proof}

Applying Lemma \ref{lem:vvH} instead of Lemma \ref{lem:vv}, the next proposition can be obtained using the arguments similar to those for the proof of  Proposition \ref{prop:DiagGram}.

\renewcommand{\theproposition}{\ref{prop:DiagGram}$'$}
\begin{proposition} \label{prop:DiagGramH} Let $C$ be a non-zero  linear code  of length $n$ over $\mathbb{F}_{q^2}$. If $q$ is odd, then $GG^\dagger$ is diagonalizable for every generator generator matrix  $G$ of  $C$. 
\end{proposition}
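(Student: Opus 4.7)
The plan is to mimic the induction used in the proof of Proposition \ref{prop:DiagGram} almost verbatim, with the Euclidean inner product replaced by the Hermitian one and Lemma \ref{lem:vv} replaced by Lemma \ref{lem:vvH}. The only conceptual wrinkle is the fact that $\langle \cdot,\cdot\rangle_H$ is not symmetric but conjugate-symmetric, i.e. $\langle \boldsymbol{u},\boldsymbol{v}\rangle_H = \overline{\langle \boldsymbol{v},\boldsymbol{u}\rangle_H}$. This identity will be the tool used to transfer the Euclidean-style cancellation argument to the Hermitian setting.

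Induct on $k = \dim C$. The base case $k=1$ is immediate: $GG^\dagger$ is a $1\times 1$ matrix, hence already diagonal. For the inductive step, if $C$ is Hermitian self-orthogonal, then Proposition \ref{rank_eqH} gives $\rank(GG^\dagger) = k - \dim(\hull_H(C)) = 0$, so $GG^\dagger = [0]$, which is diagonal. Otherwise, Lemma \ref{lem:vvH} produces a vector $\boldsymbol{v}\in C$ with $\langle \boldsymbol{v},\boldsymbol{v}\rangle_H\ne 0$.

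Next I would set $D = \{\boldsymbol{c}\in C : \langle \boldsymbol{c},\boldsymbol{v}\rangle_H = 0\}$, which is an $\mathbb{F}_{q^2}$-subspace of $C$ since the Hermitian form is $\mathbb{F}_{q^2}$-linear in its first slot. Because $\langle \boldsymbol{v},\boldsymbol{v}\rangle_H\ne 0$ the vector $\boldsymbol{v}$ lies outside $D$, so $C = D\oplus\langle \boldsymbol{v}\rangle$ and $\dim D = k-1$. By the inductive hypothesis $D$ has a generator matrix $G$ with rows $\boldsymbol{v}_1,\dots,\boldsymbol{v}_{k-1}$ such that $GG^\dagger$ is diagonal. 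Taking
\[
G' = \begin{bmatrix} \boldsymbol{v} \\ G \end{bmatrix},
\]
a generator matrix of $C$, the off-diagonal entries of $G'(G')^\dagger$ come in three flavors: $\langle \boldsymbol{v}_i,\boldsymbol{v}_j\rangle_H$ with $i\ne j$ (zero by induction), $\langle \boldsymbol{v}_i,\boldsymbol{v}\rangle_H$ (zero since $\boldsymbol{v}_i\in D$), and $\langle \boldsymbol{v},\boldsymbol{v}_i\rangle_H = \overline{\langle \boldsymbol{v}_i,\boldsymbol{v}\rangle_H} = 0$. Thus $G'(G')^\dagger$ is diagonal, completing the induction.

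The main obstacle, such as it is, lies in the one step where the Hermitian case genuinely differs from the Euclidean case: one must choose the subspace $D$ using the slot in which the form is $\mathbb{F}_{q^2}$-linear (the first one), and one must verify that both mirror-image off-diagonal terms $\langle \boldsymbol{v}_i,\boldsymbol{v}\rangle_H$ and $\langle \boldsymbol{v},\boldsymbol{v}_i\rangle_H$ vanish. The conjugate-symmetry relation handles the second of these automatically. Every other step is a direct translation of the proof of Proposition \ref{prop:DiagGram}.
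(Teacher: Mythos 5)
Your proof is correct and is exactly the argument the paper intends: the paper omits the proof of Proposition \ref{prop:DiagGramH}, stating only that it follows by the same induction as Proposition \ref{prop:DiagGram} with Lemma \ref{lem:vvH} in place of Lemma \ref{lem:vv}, which is precisely what you carry out. Your explicit check that both off-diagonal terms $\langle \boldsymbol{v}_i,\boldsymbol{v}\rangle_H$ and $\langle \boldsymbol{v},\boldsymbol{v}_i\rangle_H$ vanish via conjugate-symmetry is the one detail the paper leaves implicit, and you handle it correctly.
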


The following corollary is a direct consequence of Proposition \ref{prop:DiagGramH}

\renewcommand{\thecorollary}{\ref{cor:odd2}$'$}
\begin{corollary}  \label{cor:odd2H}  Let $C$ be a linear $[n,k]_{q^2}$ code such that  $\dim({\hull}(C))=\ell$.  If  $q$ is odd, then  the following statements hold.
    
    \begin{enumerate}
        \item There exist   nonzero elements $a_1,a_2,\dots, a_{k-\ell}$ in  $\mathbb{F}_q$  and 
        a generator matrix $G$ of $C$  such that 
        \[GG^T=  \diag(a_1,a_2,\dots,a_{k-\ell}, 0,\dots,0).
        \]  
        \item  There exist   nonzero elements $b_1,b_2,\dots, b_{n-k-\ell}$ in  $\mathbb{F}_q$  and 
        a parity-check matrix $H$ of $C$  such that 
        \[HH^T=  \diag(b_1,b_2,\dots,b_{n-k-\ell}, 0,\dots,0).
        \] 
    \end{enumerate}
\end{corollary}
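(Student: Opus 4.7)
The plan is to derive this as an immediate corollary of Proposition~\ref{prop:DiagGramH}, exactly mirroring the passage from Proposition~\ref{prop:DiagGram} to Corollary~\ref{cor:odd2} in the Euclidean setting. First I would invoke Proposition~\ref{prop:DiagGramH} to obtain a generator matrix $G_0$ of $C$ whose Gramian $G_0 G_0^{\dagger}$ is diagonal. Tracing the inductive construction in the proof of that proposition, the diagonal entries are either zero or of the form $\langle \boldsymbol{v}_i,\boldsymbol{v}_i\rangle_H = \sum_j v_{ij}^{q+1}$, and each summand $v_{ij}^{q+1} = N_{\mathbb{F}_{q^2}/\mathbb{F}_q}(v_{ij})$ lies in $\mathbb{F}_q$; this is the reason the nonzero diagonal entries belong to $\mathbb{F}_q$ rather than only to $\mathbb{F}_{q^2}$, matching the ground field asserted in the statement.

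Second, I would apply Proposition~\ref{rank_eqH} to conclude that $\rank(G_0 G_0^{\dagger}) = k - \ell$, so the diagonal matrix $G_0 G_0^{\dagger}$ has exactly $k - \ell$ nonzero entries among its $k$ diagonal positions. Permuting the rows of $G_0$ by an appropriate permutation matrix $P$ produces another generator matrix $G := PG_0$ of $C$, and since $P$ has entries in the prime subfield, $P^{\dagger} = P^{T} = P^{-1}$, so $GG^{\dagger} = P (G_0 G_0^{\dagger}) P^{T}$ is simply the diagonal matrix obtained from $G_0 G_0^{\dagger}$ by the corresponding simultaneous permutation of its diagonal entries. Choosing $P$ so that the nonzero entries are moved to the leading positions yields the desired form $GG^{\dagger} = \diag(a_1,\ldots,a_{k-\ell},0,\ldots,0)$ with each $a_i \in \mathbb{F}_q \setminus \{0\}$, establishing part~(1).

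For part~(2), I would apply part~(1) to the Hermitian dual $C^{\perp_H}$ in place of $C$. A parity-check matrix of $C$ is by definition a generator matrix of $C^{\perp_H}$, and since $\hull_H(C^{\perp_H}) = C^{\perp_H} \cap C = \hull_H(C)$ has dimension $\ell$, part~(1) applied to $C^{\perp_H}$ supplies a parity-check matrix $H$ of $C$ satisfying $HH^{\dagger} = \diag(b_1,\ldots,b_{n-k-\ell},0,\ldots,0)$ with each $b_i$ a nonzero element of $\mathbb{F}_q$, as required. There is no genuine obstacle in this argument; the only subtlety worth flagging is verifying that the nonzero diagonal entries truly land in $\mathbb{F}_q$, which is automatic from the norm identity $v^{q+1} \in \mathbb{F}_q$ noted above.
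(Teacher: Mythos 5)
Your proposal is correct and follows exactly the route the paper intends: the paper states this corollary as a direct consequence of Proposition~\ref{prop:DiagGramH} (combined with Proposition~\ref{rank_eqH} to count the $k-\ell$ nonzero diagonal entries and duality for part~(2)), which is precisely your argument. Your additional observation that the nonzero diagonal entries $\langle \boldsymbol{v}_i,\boldsymbol{v}_i\rangle_H=\sum_j v_{ij}^{q+1}$ are sums of norms and hence lie in $\mathbb{F}_q$ is a worthwhile justification of the ground field asserted in the statement (and implicitly corrects the $GG^T$ versus $GG^{\dagger}$ slip in the paper's wording).
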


\renewcommand{\thecorollary}{\ref{cor:oddLCD}$'$}
\begin{corollary}  \label{cor:oddLCDH}   Let $q$ be an odd prime power and let $C$ be a linear  code  over $\mathbb{F}_{q^2}$.  Then    $C$ is Hermitian  complementary dual if and only if $C$ has a Hermitian orthogonal basis. 
\end{corollary}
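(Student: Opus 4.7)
The plan is to reduce the claim to Proposition \ref{prop:DiagGramH} combined with the rank characterization in Proposition \ref{rank_eqH}, exactly mirroring the Euclidean argument used for Corollary \ref{cor:oddLCD}. The key observation is that a Hermitian orthogonal basis of $C$ corresponds precisely to a generator matrix whose Gramian is diagonal with nonzero diagonal entries, and the LCD condition is the rank condition $\rank(GG^\dagger) = \dim C$.

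For the forward implication, I would assume $\hull_H(C)=\{\boldsymbol{0}\}$ and invoke Proposition \ref{prop:DiagGramH} to extract a generator matrix $G$ of $C$ with $GG^\dagger$ diagonal, say $GG^\dagger = \diag(a_1,\ldots,a_k)$. The LCD hypothesis together with Proposition \ref{rank_eqH} forces $\rank(GG^\dagger)=k$, so each $a_i$ is nonzero. Reading the entries of $GG^\dagger$ off the rows $\boldsymbol{v}_1,\ldots,\boldsymbol{v}_k$ of $G$ then shows $\langle \boldsymbol{v}_i,\boldsymbol{v}_j\rangle_H = 0$ for $i\ne j$ and $\langle \boldsymbol{v}_i,\boldsymbol{v}_i\rangle_H = a_i \ne 0$, giving the desired Hermitian orthogonal basis.

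For the converse, given a Hermitian orthogonal basis $\{\boldsymbol{v}_1,\ldots,\boldsymbol{v}_k\}$ of $C$ (with each $\langle \boldsymbol{v}_i,\boldsymbol{v}_i\rangle_H \ne 0$, as the statement implicitly requires since any isotropic basis element would automatically lie in $\hull_H(C)$), I would form the generator matrix $G$ whose rows are the $\boldsymbol{v}_i$. Orthogonality gives $GG^\dagger = \diag(\langle \boldsymbol{v}_1,\boldsymbol{v}_1\rangle_H,\ldots,\langle \boldsymbol{v}_k,\boldsymbol{v}_k\rangle_H)$, which is nonsingular, and Proposition \ref{rank_eqH} then yields $\dim(\hull_H(C))=0$.

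I do not anticipate any genuine obstacle: all the analytic work has been absorbed into Proposition \ref{prop:DiagGramH}, whose induction relies on Lemma \ref{lem:vvH} to produce a non-isotropic vector in any code that fails to be Hermitian self-orthogonal over a field of odd characteristic. The only minor care to take is making the non-isotropy condition on each $\langle \boldsymbol{v}_i,\boldsymbol{v}_i\rangle_H$ explicit in the definition of a Hermitian orthogonal basis, so that the equivalence with the LCD property is genuine.
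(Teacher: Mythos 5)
Your proposal is correct and follows exactly the route the paper intends: the corollary is stated as a direct consequence of Proposition \ref{prop:DiagGramH} (diagonalizability of $GG^\dagger$ in odd characteristic) combined with the rank characterization of the Hermitian hull in Proposition \ref{rank_eqH}, which is precisely what you carry out. Your remark that the non-isotropy of each basis vector must be built into the notion of a Hermitian orthogonal basis is a sensible clarification and does not change the argument.
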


The following results   hold true for every prime powers $q$.  However, for an odd prime power  $q$, we already have stronger results in discussion above. In practice,  we may assume that $q$ is even.

\renewcommand{\theproposition}{\ref{diag_maxC}$'$}
\begin{proposition}\label{diag_maxCH} Let $C$ be a linear $[n,k]_{q^2}$ code such that  $\dim({\hull}_H(C))=\ell$.  If $\hull_H(C)$ is maximal  self-orthogonal in $C$, then  there exist   nonzero elements $a_1,a_2,\dots, a_{k-\ell}$ in  $\mathbb{F}_{q^2}$  and 
    a generator matrix $G$ of $C$  such that 
    \[GG^\dagger=  \diag(a_1,a_2,\dots,a_{k-\ell}, 0,\dots,0).
    \]  
\end{proposition}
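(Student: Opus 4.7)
The plan is to mimic the proof of Proposition \ref{diag_maxC} with the Euclidean inner product replaced throughout by the Hermitian one and the classical Gram--Schmidt process replaced by its Hermitian analogue. The two properties that keep the argument genuinely parallel to the Euclidean case are: (i) the Hermitian inner product is sesquilinear, and (ii) for any $\boldsymbol{u},\boldsymbol{v}\in \mathbb{F}_{q^2}^n$ one has $\langle \boldsymbol{u},\boldsymbol{v}\rangle_H=\langle \boldsymbol{v},\boldsymbol{u}\rangle_H^q$, so that vanishing of the inner product is symmetric in its two arguments. In particular, since $\hull_H(C)\subseteq C^{\perp_H}$, every element of $\hull_H(C)$ is Hermitian-orthogonal on both sides to every element of $C$.

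First I would fix a basis $\mathcal{B}=\{\boldsymbol{r}_1,\dots,\boldsymbol{r}_\ell\}$ of $\hull_H(C)$ and establish the Hermitian analogue of the key observation used in the Euclidean proof: under the maximality assumption, every codeword $\boldsymbol{x}\in C\setminus \hull_H(C)$ satisfies $\langle \boldsymbol{x},\boldsymbol{x}\rangle_H\ne 0$. Indeed, if such an $\boldsymbol{x}$ had $\langle \boldsymbol{x},\boldsymbol{x}\rangle_H=0$, then expanding $\langle \boldsymbol{c}+a\boldsymbol{x},\boldsymbol{c}'+b\boldsymbol{x}\rangle_H$ for $\boldsymbol{c},\boldsymbol{c}'\in \hull_H(C)$ and using (ii) together with self-orthogonality of $\hull_H(C)$ shows that $\hull_H(C)+\langle \boldsymbol{x}\rangle$ is Hermitian self-orthogonal in $C$ and strictly contains $\hull_H(C)$, contradicting maximality.

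Next I would extend $\mathcal{B}$ to a basis $\mathcal{B}\cup\{\boldsymbol{t}_{\ell+1},\dots,\boldsymbol{t}_k\}$ of $C$ and let $T$ denote the span of the extension vectors, so that $T\cap \hull_H(C)=\{\boldsymbol{0}\}$; by the previous step every nonzero vector in $T$ has nonzero Hermitian norm. This legitimizes the Hermitian Gram--Schmidt process on $\boldsymbol{t}_{\ell+1},\dots,\boldsymbol{t}_k$, producing $\boldsymbol{r}_{\ell+1},\dots,\boldsymbol{r}_k\in T$ that are mutually Hermitian-orthogonal, with $a_i:=\langle \boldsymbol{r}_{\ell+i},\boldsymbol{r}_{\ell+i}\rangle_H\ne 0$. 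Setting $G$ to be the matrix whose rows are $\boldsymbol{r}_{\ell+1},\dots,\boldsymbol{r}_k,\boldsymbol{r}_1,\dots,\boldsymbol{r}_\ell$, the block-structure computation of Proposition \ref{diag_maxC} carries over verbatim (with $G^T$ replaced by $G^\dagger$) to give $GG^\dagger=\diag(a_1,\dots,a_{k-\ell},0,\dots,0)$. The only subtlety I expect is the Hermitian Gram--Schmidt step: one must choose the coefficient $\alpha_j=\langle \boldsymbol{t}_{\ell+i},\boldsymbol{r}_{\ell+j}\rangle_H/\langle \boldsymbol{r}_{\ell+j},\boldsymbol{r}_{\ell+j}\rangle_H$ with the conjugation on the correct side, but once (ii) is invoked this is routine and is the only place where the proof departs substantively from the Euclidean version.
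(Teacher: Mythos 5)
Your proposal is correct and matches the paper's intent exactly: the paper omits the proof of Proposition \ref{diag_maxCH}, stating that it follows by the same argument as the Euclidean Proposition \ref{diag_maxC}, and your writeup is precisely that analogue, with the conjugate-symmetry $\langle \boldsymbol{u},\boldsymbol{v}\rangle_H=\langle \boldsymbol{v},\boldsymbol{u}\rangle_H^q$ and the sesquilinear Gram--Schmidt coefficient handled correctly. No gaps.
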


We can replace a generator matrix $G$ by a parity-check matrix $H$ of $C$ and derive the result as follows.

\renewcommand{\thecorollary}{\ref{cor:ev1}$'$}
\begin{corollary}  \label{cor:ev1H} Let $C$ be a linear $[n,k]_{q^2}$ code such that  $\dim({\hull}_H(C))=\ell$.  If ${\hull}_H(C)$ is maximal  self-orthogonal in $C^{\perp_H}$, then  there exist   nonzero elements $b_1,b_2,\dots, b_{n-k-\ell}$ in  $\mathbb{F}_{q^2}$  and 
    a parity-check matrix $H$ of $C$  such that 
    \[HH^\dagger=  \diag(b_1,b_2,\dots,b_{n-k-\ell}, 0,\dots,0).
    \] 
\end{corollary}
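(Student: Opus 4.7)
The plan is to obtain Corollary \ref{cor:ev1H} as a direct consequence of Proposition \ref{diag_maxCH} applied to the Hermitian dual $C^{\perp_H}$ in place of $C$. The key observation is the self-duality of the hull under Hermitian dualization, namely
\[
\hull_H(C) \;=\; C \cap C^{\perp_H} \;=\; C^{\perp_H} \cap (C^{\perp_H})^{\perp_H} \;=\; \hull_H(C^{\perp_H}).
\]
In particular, $\dim(\hull_H(C^{\perp_H})) = \ell$, and the hypothesis that $\hull_H(C)$ is maximal self-orthogonal in $C^{\perp_H}$ is precisely the statement that $\hull_H(C^{\perp_H})$ is maximal self-orthogonal in $C^{\perp_H}$, which is the hypothesis of Proposition \ref{diag_maxCH} for the code $C^{\perp_H}$.

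Next I would translate dimensions and matrices. Since $C$ is an $[n,k]_{q^2}$ code, $C^{\perp_H}$ is an $[n, n-k]_{q^2}$ code. A parity-check matrix $H$ of $C$ is, by definition, a generator matrix of $C^{\perp_H}$. Hence Proposition \ref{diag_maxCH}, applied to $C^{\perp_H}$ with parameters $k' = n-k$ and $\ell' = \ell$, produces nonzero elements $b_1, b_2, \ldots, b_{(n-k)-\ell} \in \mathbb{F}_{q^2}$ and a generator matrix $H$ of $C^{\perp_H}$, i.e.\ a parity-check matrix of $C$, such that
\[
HH^\dagger \;=\; \diag(b_1, b_2, \ldots, b_{n-k-\ell}, 0, \ldots, 0),
\]
which is exactly the desired conclusion.

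There is no real obstacle here beyond recording the two identifications above; the substance has already been done in Proposition \ref{diag_maxCH}. The only point to check carefully is that the notion of \emph{maximal Hermitian self-orthogonal in $C^{\perp_H}$} used as hypothesis matches the notion used in Proposition \ref{diag_maxCH}, and that the equality $\hull_H(C) = \hull_H(C^{\perp_H})$ transfers the hypothesis cleanly. Once this is observed, the remainder is purely a change of notation, with $G$ replaced by $H$ and the dimension $k-\ell$ replaced by $(n-k)-\ell$.
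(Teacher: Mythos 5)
Your proposal is correct and follows exactly the route the paper intends: the paper derives this corollary from Proposition \ref{diag_maxCH} by "replacing a generator matrix $G$ by a parity-check matrix $H$ of $C$," which is precisely your dualization via $\hull_H(C)=\hull_H(C^{\perp_H})$ and the identification of a parity-check matrix of $C$ with a generator matrix of $C^{\perp_H}$. You have merely written out the details that the paper leaves implicit.
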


\renewcommand{\thecorollary}{\ref{cor:maxC}$'$}
\begin{corollary} \label{cor:maxCH} Let $C$ be a linear $[n,k]_{q^2}$ code.  If $C$ is maximal Hermitian self-orthogonal, then  there exist   nonzero elements $b_1,b_2,\dots, b_{n-2k}$ in  $\mathbb{F}_{q^2}$  and 
    a parity-check matrix $H$ of $C$   such that
    \[HH^\dagger=  \diag(b_1,b_2,\dots,b_{n-2k}, 0,\dots,0).
    \] 
\end{corollary}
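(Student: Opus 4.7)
The plan is to derive this as a direct specialization of Corollary~\ref{cor:ev1H} by verifying that a maximal Hermitian self-orthogonal code $C$ automatically has $\hull_H(C)$ maximal self-orthogonal in $C^{\perp_H}$, and that its hull dimension equals $k$.

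First I would observe that if $C$ is maximal Hermitian self-orthogonal then $C\subseteq C^{\perp_H}$, so
\[\hull_H(C)=C\cap C^{\perp_H}=C,\]
and in particular $\dim(\hull_H(C))=k$. Substituting $\ell=k$ into the parameter count of Corollary~\ref{cor:ev1H} yields exactly $n-k-\ell=n-2k$ nonzero diagonal entries, matching the statement to be proved.

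Next I would verify the hypothesis of Corollary~\ref{cor:ev1H}, namely that $\hull_H(C)$ is maximal self-orthogonal in $C^{\perp_H}$. Suppose for contradiction that some Hermitian self-orthogonal subspace $D\subseteq C^{\perp_H}$ strictly contains $\hull_H(C)=C$. Then $D$ is itself a Hermitian self-orthogonal linear code in $\mathbb{F}_{q^2}^n$ (since $D\subseteq D^{\perp_H}$ already holds inside $C^{\perp_H}$, and this is the definition of Hermitian self-orthogonality globally), and $C\subsetneq D$. This contradicts the maximality of $C$ among Hermitian self-orthogonal codes. Hence no such $D$ exists, and $\hull_H(C)$ is maximal self-orthogonal in $C^{\perp_H}$.

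Having verified the hypothesis, I would conclude by invoking Corollary~\ref{cor:ev1H} directly: there exist nonzero elements $b_1,b_2,\dots,b_{n-2k}\in\mathbb{F}_{q^2}$ and a parity-check matrix $H$ of $C$ with
\[HH^\dagger=\diag(b_1,b_2,\dots,b_{n-2k},0,\dots,0),\]
which is the claim. There is no real obstacle here; the only subtlety is the conceptual one of recognizing that global maximality of $C$ inside $\mathbb{F}_{q^2}^n$ forces the apparently weaker relative maximality of $\hull_H(C)$ inside $C^{\perp_H}$, and the numerical check that $n-k-\ell$ collapses to $n-2k$ when $\ell=k$.
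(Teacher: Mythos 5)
Your proposal is correct and follows exactly the paper's route: the paper's justification for the Euclidean analogue (Corollary \ref{cor:maxC}) is the one-line observation that maximal self-orthogonality of $C$ gives $\hull(C)=C$ maximal self-orthogonal in $C^\perp$, whence Corollary \ref{cor:ev1} applies, and the Hermitian case is declared analogous. Your verification that global maximality of $C$ forces relative maximality of $\hull_H(C)=C$ in $C^{\perp_H}$, together with the substitution $\ell=k$, is precisely the intended argument, just spelled out in more detail than the paper bothers to.
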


%
%
%

\renewcommand{\thecorollary}{\ref{cor:ev2}$'$}
\begin{corollary}    \label{cor:ev2H} Let $C$ be a linear $[n,k]_{q^2}$ code such that  $\dim(\hull_H(C))=\ell$.    If $q$ is even, then   the following statements hold.
    \begin{enumerate}[$1)$]
        \item   $k-\ell\leq 1$ if and only if $\hull_H(C)$ is maximal self-orthogonal in $C$.
        \item $n-k-\ell\leq 1$ if and only if  $\hull_H(C)$ is maximal self-orthogonal in $C^{\perp_H}$.
    \end{enumerate}
    
\end{corollary}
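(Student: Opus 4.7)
The plan is to mirror the proof of Corollary \ref{cor:ev2}, adapting to the Hermitian setting. Part 2) will follow from Part 1) applied to $C^{\perp_H}$, using $\hull_H(C)=\hull_H(C^{\perp_H})$ and $\dim(C^{\perp_H})=n-k$, so we focus on Part 1).

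For the sufficient direction ($k-\ell\le 1$ implies $\hull_H(C)$ is maximal self-orthogonal in $C$), the Hermitian analogue of Lemma \ref{lemMaxOS} goes through verbatim: either $\hull_H(C)=C$ when $k=\ell$, or there is a single $\boldsymbol{v}\in C\setminus\hull_H(C)$, and the identity $C=\hull_H(C)+\langle\boldsymbol{v}\rangle$ combined with the hypothetical $\langle\boldsymbol{v},\boldsymbol{v}\rangle_H=0$ would force $\boldsymbol{v}\in\hull_H(C)$, a contradiction.

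For the necessary direction we argue the contrapositive. Assume $q$ is even, $\hull_H(C)$ is maximal self-orthogonal in $C$, and $k-\ell\ge 2$. Choose $\boldsymbol{v}_1,\boldsymbol{v}_2\in C$ linearly independent modulo $\hull_H(C)$; maximality yields $\langle\boldsymbol{v}_i,\boldsymbol{v}_i\rangle_H\ne 0$ for $i=1,2$, and these values lie in $\mathbb{F}_q$ since $\langle\boldsymbol{v},\boldsymbol{v}\rangle_H^q=\langle\boldsymbol{v},\boldsymbol{v}\rangle_H$. The main obstacle, and the one genuine departure from the Euclidean proof, is that the Hermitian cross terms $a^q\langle\boldsymbol{v}_1,\boldsymbol{v}_2\rangle_H+a\langle\boldsymbol{v}_2,\boldsymbol{v}_1\rangle_H=\mathrm{Tr}_{\mathbb{F}_{q^2}/\mathbb{F}_q}(a^q\langle\boldsymbol{v}_1,\boldsymbol{v}_2\rangle_H)$ do not automatically vanish in characteristic $2$, so we cannot directly form $\boldsymbol{v}_1+a\boldsymbol{v}_2$ as in the Euclidean argument. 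We sidestep this with a Gram-Schmidt correction: replace $\boldsymbol{v}_2$ by
\[\boldsymbol{v}_2':=\boldsymbol{v}_2-\frac{\langle \boldsymbol{v}_2,\boldsymbol{v}_1\rangle_H}{\langle \boldsymbol{v}_1,\boldsymbol{v}_1\rangle_H}\boldsymbol{v}_1\in C\setminus\hull_H(C),\]
so that $\langle\boldsymbol{v}_1,\boldsymbol{v}_2'\rangle_H=\langle\boldsymbol{v}_2',\boldsymbol{v}_1\rangle_H=0$ (verified using $\langle\boldsymbol{v}_1,\boldsymbol{v}_1\rangle_H\in\mathbb{F}_q$ and $\langle\boldsymbol{v}_2,\boldsymbol{v}_1\rangle_H^q=\langle\boldsymbol{v}_1,\boldsymbol{v}_2\rangle_H$); a second appeal to maximality gives $\langle\boldsymbol{v}_2',\boldsymbol{v}_2'\rangle_H\ne 0$.

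With the cross terms killed, we invoke the surjectivity of the norm map $N\colon\mathbb{F}_{q^2}^*\to\mathbb{F}_q^*$, $a\mapsto a^{q+1}$, valid in every characteristic, to pick $a\in\mathbb{F}_{q^2}^*$ with $a^{q+1}=\langle\boldsymbol{v}_1,\boldsymbol{v}_1\rangle_H/\langle\boldsymbol{v}_2',\boldsymbol{v}_2'\rangle_H$. The Hermitian self-inner product then collapses to $\langle\boldsymbol{v}_1,\boldsymbol{v}_1\rangle_H+a^{q+1}\langle\boldsymbol{v}_2',\boldsymbol{v}_2'\rangle_H=2\langle\boldsymbol{v}_1,\boldsymbol{v}_1\rangle_H=0$, using that $q$ is even. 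The vector $\boldsymbol{v}_1+a\boldsymbol{v}_2'$ still lies in $C\setminus\hull_H(C)$ because $a\ne 0$ and $\boldsymbol{v}_1,\boldsymbol{v}_2$ are linearly independent modulo $\hull_H(C)$, so $\hull_H(C)+\langle\boldsymbol{v}_1+a\boldsymbol{v}_2'\rangle$ is a strictly larger Hermitian self-orthogonal subcode of $C$, contradicting maximality. In summary, the surjectivity of the norm plays the Hermitian role of the surjectivity of squaring in the Euclidean proof, and the Gram-Schmidt correction supplies the extra ingredient forced on us by the asymmetry of $\langle\cdot,\cdot\rangle_H$.
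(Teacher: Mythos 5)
Your proof is correct. The paper omits the proof of this corollary, asserting that the argument of Corollary \ref{cor:ev2} carries over; your write-up is useful precisely because it shows the carry-over is not verbatim. In the Euclidean necessity argument the cross term $2a\langle \boldsymbol{v}_1,\boldsymbol{v}_2\rangle$ vanishes for free in characteristic $2$, whereas the Hermitian cross term $a^q\langle\boldsymbol{v}_1,\boldsymbol{v}_2\rangle_H+a\langle\boldsymbol{v}_1,\boldsymbol{v}_2\rangle_H^{q}$ is a trace and need not vanish; you neutralize it correctly by first Gram--Schmidting $\boldsymbol{v}_2$ against $\boldsymbol{v}_1$ (legitimate because $\langle\boldsymbol{v}_1,\boldsymbol{v}_1\rangle_H\ne 0$, $\langle\boldsymbol{v}_1,\boldsymbol{v}_1\rangle_H\in\mathbb{F}_q$, and the corrected vector $\boldsymbol{v}_2'$ stays outside $\hull_H(C)$ since its $\boldsymbol{v}_2$-coefficient is $1$), and you replace the surjectivity of squaring on $\mathbb{F}_q$ by the surjectivity of the norm $a\mapsto a^{q+1}$ onto $\mathbb{F}_q^{*}$, which applies because the two Hermitian self-products lie in $\mathbb{F}_q^{*}$. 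The sufficiency direction and the reduction of 2) to 1) via $\hull_H(C)=\hull_H(C^{\perp_H})$ are exactly the Hermitian transcription of Lemma \ref{lemMaxOS} and need no change. The one step you use twice but leave implicit is that any $\boldsymbol{x}\in C\setminus\hull_H(C)$ with $\langle\boldsymbol{x},\boldsymbol{x}\rangle_H=0$ already violates maximality (as in the proof of Proposition \ref{diag_maxCH}); this is what licenses both $\langle\boldsymbol{v}_i,\boldsymbol{v}_i\rangle_H\ne 0$ and $\langle\boldsymbol{v}_2',\boldsymbol{v}_2'\rangle_H\ne 0$, and stating it once at the outset would make the argument self-contained.
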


\renewcommand{\thecorollary}{\ref{cor:ev3}$'$}
\begin{corollary}   \label{cor:ev3H} Let $C$ be a non-zero  linear code  of length $n$ over $\mathbb{F}_{q^2}$. If $q$ is even and $\dim(C)-\dim(\hull_H(C))\leq 1$, then $GG^\dagger$ is diagonalizable for every  generator matrix $G$ of  $C$.
\end{corollary}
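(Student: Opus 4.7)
The statement has exactly the same shape as its Euclidean twin Corollary~\ref{cor:ev3}, so the plan is to chain the two Hermitian tools prepared immediately above it, namely Corollary~\ref{cor:ev2H} and Proposition~\ref{diag_maxCH}, in the obvious order.

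Set $k=\dim(C)$ and $\ell=\dim(\hull_H(C))$, and suppose $q$ is even with $k-\ell\leq 1$. The first step is to apply Corollary~\ref{cor:ev2H}, whose $1)$ direction converts the numerical hypothesis $k-\ell\leq 1$ into the structural statement that $\hull_H(C)$ is maximal Hermitian self-orthogonal inside $C$. This is essentially the only place the evenness of $q$ enters: in characteristic~$2$ the failure of $x\mapsto 2x$ to be invertible is exactly what powers the square-root trick in Corollary~\ref{cor:ev2H} and makes $k-\ell\leq 1$ equivalent to maximality of the hull in $C$. The second step feeds this maximality statement into Proposition~\ref{diag_maxCH}, which returns nonzero scalars $a_1,\dots,a_{k-\ell}\in\mathbb{F}_{q^2}$ and a specific generator matrix $G_0$ of $C$ with
\[
G_0 G_0^\dagger=\diag(a_1,a_2,\dots,a_{k-\ell},0,\dots,0),
\]
which is already a diagonal matrix. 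For an arbitrary generator matrix $G$ of $C$ one has $G=PG_0$ for some invertible matrix $P$ over $\mathbb{F}_{q^2}$, and then $GG^\dagger=P\,G_0 G_0^\dagger\,P^\dagger$ exhibits $GG^\dagger$ as Hermitian-congruent, via the change-of-generator transformation $M\mapsto PMP^\dagger$, to the displayed diagonal matrix. This is exactly the sense in which ``diagonalizable'' is used throughout Sections~\ref{sec3} and~\ref{sec5}.

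The one subtlety worth flagging is that the diagonalizability at issue is diagonalizability by Hermitian congruence $M\mapsto PMP^\dagger$, i.e.\ by a change of generator matrix for $C$, rather than by ordinary similarity. Granting that reading, the two imported facts combine trivially and there is no genuine obstacle to the argument; the real work has already been done in the characteristic~$2$ square-root computation powering Corollary~\ref{cor:ev2H} and in the Gram--Schmidt-style construction inside Proposition~\ref{diag_maxCH}.
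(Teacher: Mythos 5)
Your proof is correct and follows essentially the route the paper intends: the paper omits the proof of this corollary, but its Euclidean twin Corollary~\ref{cor:ev3} is exactly the chain Lemma~\ref{lemMaxOS}(1) followed by Proposition~\ref{diag_maxC}, and you run the Hermitian analogue of that chain, with the right reading of ``diagonalizable'' as congruence $M\mapsto PMP^\dagger$ induced by a change of generator matrix. One small correction to your commentary: the implication you actually use ($k-\ell\leq 1$ implies maximality of the hull in $C$) is the direction that holds for \emph{all} $q$ (cf.\ Lemma~\ref{lemMaxOS}), so the evenness of $q$ is not what ``powers'' this step --- it is needed only for the converse in Corollary~\ref{cor:ev2H}, which you do not use, and the hypothesis appears in the statement mainly because the odd case is already covered by the stronger Proposition~\ref{prop:DiagGramH}.
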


\renewcommand{\theproposition}{\arabic{section}.\arabic{proposition}}
\renewcommand{\thelemma}{\arabic{section}.\arabic{lemma}}
\renewcommand{\thecorollary}{\arabic{section}.\arabic{corollary}}

\section{Applications} \label{sec6}

In this section,  hulls and the diagonalizability of the Gramians discussed in Sections \ref{sec3} and \ref{sec5} are  applied to constructions of  
Entanglement-Assisted Quantum Error Correcting Codes (EAQECCs).  EAQECCs were introduced in   \cite{hsieh} which can be constructed from
classical  linear codes.  In this case, the performance of the resulting quantum codes can be  determined by the performance of the underlying classical codes. Precisely, an $[[n,k,d;c]]_q$ EAQECC  encodes $k$ logical qudits into $n$ physical qudits using
$c$ copies of maximally entangled states and its  performance   is measured by its rate $\frac{k}{n}$ and net rate ($\frac{k-c}{n})$. As shown in  \cite{brun2}, the net rate of an EAQECC is positive it is possible to obtain catalytic codes.  The readers may  refer to  \cite{brun}, \cite{GJG2018} and the references therein for  more details on EAQECCs.

The following results from \cite{GJG2018} are useful for constructions of 
EAQECCs  from classical linear codes and their hulls.

\begin{proposition}[{\cite[Corollary 3.1]{GJG2018}}] 
    \label{cor:hull1}
    Let $C$ be a classical $[n,k,d]_{q}$ linear code and $C^{\bot}$ its Euclidean dual with parameters $[n,n-k,d^{\bot} ]_{q}$.
    Then there exist $[[n,k-\dim(Hull(C)), d;n-k-\dim(\hull(C))]]_{q}$ and $[[n,n-k-\dim(\hull(C)), d^{\bot};k-\dim(\hull(C))]]_{q}$ EAQECCs.
\end{proposition}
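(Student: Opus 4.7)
The approach is to invoke the Brun--Devetak--Hsieh construction of entanglement-assisted quantum codes from classical linear codes: given an $[n,k,d]_q$ classical code $C$ with parity-check matrix $H$, there exists an $[[n,\, 2k-n+c,\, d;\, c]]_q$ EAQECC where $c = \rank(HH^T)$ is the number of shared maximally entangled pairs required. This construction, detailed in \cite{brun,hsieh} and extended to the $q$-ary setting in \cite{GJG2018}, proceeds by embedding the stabilizer generators coming from the rows of $H$ (used to monitor both $X$-type and $Z$-type errors) into a symplectic vector space; the defect from symplectic self-orthogonality is measured precisely by $\rank(HH^T)$, and each unit of defect is absorbed by consuming one maximally entangled pair shared with an auxiliary system.

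To derive the first EAQECC, I would substitute the Gramian rank identity $\rank(HH^T) = n-k-\dim(\hull(C))$ from Proposition \ref{hull1-E}. This gives $c = n-k-\dim(\hull(C))$ and $2k - n + c = k - \dim(\hull(C))$, so the parameters are precisely $[[n,\, k-\dim(\hull(C)),\, d;\, n-k-\dim(\hull(C))]]_q$, as required.

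For the second EAQECC, I would apply exactly the same construction to $C^\perp$, which is an $[n,n-k,d^\perp]_q$ code whose parity-check matrix is a generator matrix $G$ of $C$. The required number of entangled pairs becomes $\rank(GG^T) = k - \dim(\hull(C^\perp)) = k - \dim(\hull(C))$ by Proposition \ref{hull1-E} (together with the identity $\dim(\hull(C)) = \dim(\hull(C^\perp))$), and the number of logical qudits is then $2(n-k) - n + (k - \dim(\hull(C))) = n-k-\dim(\hull(C))$, yielding the required $[[n,\, n-k-\dim(\hull(C)),\, d^\perp;\, k-\dim(\hull(C))]]_q$ code.

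The main obstacle is not the substitution itself---once Proposition \ref{hull1-E} is granted, both derivations are purely a matter of arithmetic---but rather the underlying symplectic/stabilizer-formalism argument that justifies the EAQECC construction in the first place. That technical core is the nontrivial content of \cite{brun,hsieh,GJG2018} and would be invoked here as a black box; given it, the present proposition reduces to the Gramian-rank bookkeeping outlined above, with the two parameter sets corresponding to running the construction on $C$ and on $C^\perp$ respectively.
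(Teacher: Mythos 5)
The paper states this proposition without proof, importing it verbatim from \cite[Corollary 3.1]{GJG2018}; your reconstruction---the Brun--Devetak--Hsieh/Wilde--Brun construction giving an $[[n,\,2k-n+c,\,d;\,c]]_q$ EAQECC with $c=\rank(HH^T)$, combined with the Gramian rank identity of Proposition~\ref{hull1-E} and applied once to $C$ and once to $C^{\perp}$---is exactly the derivation used in that reference, and your parameter bookkeeping is correct. Since the paper offers no argument of its own here, there is nothing to contrast; your proposal matches the intended (cited) proof.
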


\begin{proposition}[{\cite[Corollary 3.2]{GJG2018}}] 
    \label{cor:hull}
    Let $C$ be a classical $[n,k,d]_{q^2}$ code and let $C^{\bot H}$ be
    its Hermitian dual with parameters $[n,n-k,d^{\bot H} ]_{q^2}$.
    Then there exists $[[n,k-\dim(\hull_H(C)), d;n-k-\dim(\hull_H(C))]]_{q}$ and $[[n,n-k-\dim(\hull_H(C)), d^{\bot};k-\dim(\hull_H(C))]]_{q}$ EAQECCs.
\end{proposition}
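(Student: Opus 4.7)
The plan is to reduce the statement to the standard Hermitian EAQECC construction and then read off the parameters via the Gramian rank formula from Proposition \ref{rank_eqH}. Concretely, the workhorse result from the EAQECC literature (originating in \cite{brun} and applied in the Hermitian setting in \cite{GJG2018}) says: if $C$ is an $[n,k,d]_{q^2}$ linear code with parity-check matrix $H$, then one obtains an
\[ [[n,\; 2k-n+c,\; d;\; c]]_q \]
EAQECC, where $c=\rank(HH^\dagger)$ is the required number of maximally entangled pairs. I would begin the proof by invoking this construction verbatim.

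Next, I would identify $c$ using Proposition \ref{rank_eqH}. Setting $\ell=\dim(\hull_H(C))$, that proposition gives
\[ \rank(HH^\dagger)=n-k-\ell \qquad\text{and}\qquad \rank(GG^\dagger)=k-\ell \]
for any parity-check matrix $H$ and any generator matrix $G$ of $C$, so both ranks depend only on $C$ and not on the chosen matrices. Substituting $c=n-k-\ell$ into the base formula collapses the dimension slot to $2k-n+(n-k-\ell)=k-\ell$, producing an $[[n,\;k-\ell,\;d;\;n-k-\ell]]_q$ EAQECC. This is exactly the first family in the proposition.

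For the second family I would apply the same construction to $C^{\perp_H}$ in place of $C$. Here the key observation is that the Hermitian hull is self-dual in the sense that
\[ \hull_H(C^{\perp_H}) = C^{\perp_H}\cap (C^{\perp_H})^{\perp_H} = C^{\perp_H}\cap C = \hull_H(C), \]
so the hull dimension is unchanged. Since $C^{\perp_H}$ has parameters $[n,n-k,d^{\perp_H}]_{q^2}$, the same recipe yields an $[[n,\;n-k-\ell,\;d^{\perp_H};\;k-\ell]]_q$ EAQECC, matching the second claimed family.

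There is no real obstacle beyond citing the base construction correctly; everything else is a direct rank computation carried out by Proposition \ref{rank_eqH}. The only mildly delicate points are (i) checking that the minimum distance $d$ of the classical code is indeed preserved under the EAQECC construction (which is standard in \cite{brun} for codes over arbitrary $\mathbb{F}_{q^2}$) and (ii) verifying the $\hull_H(C^{\perp_H})=\hull_H(C)$ identity, both of which are routine. Once these are in place, the two parameter sets fall out by substitution.
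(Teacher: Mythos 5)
Your proposal is correct, but note that the paper itself offers no proof of this statement: it is imported verbatim as \cite[Corollary 3.2]{GJG2018}, so there is nothing internal to compare against. Your derivation --- invoking the base $[[n,\,2k-n+c,\,d;\,c]]_q$ construction with $c=\rank(HH^\dagger)$, substituting $\rank(HH^\dagger)=n-k-\dim(\hull_H(C))$ from Proposition \ref{rank_eqH}, and then repeating the argument for $C^{\perp_H}$ using $\hull_H(C^{\perp_H})=\hull_H(C)$ --- is precisely the standard route by which the cited corollary is obtained in \cite{GJG2018}, so your argument is essentially the intended one rather than a genuinely different approach.
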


Based on the diagonalizability of  Gramians  studied in Sections \ref{sec3} and  \ref{sec5},  EAQECCs can be constructed from all linear codes over  finite fields of odd characteristic as follows.

\begin{proposition}
    \label{propCons1}
    Let $q\geq 5$ be an odd prime power and  let $C$ be a classical $[n,k,d]_{q}$ linear code  such that  $\dim(\hull(C)) =\ell$.
    Then  there exists an $[[n+r,k-\ell,d^\prime ;n-k-\ell+r]]_{q}$ EAQECC with $d\leq d^\prime \leq d+r$ for each $0\leq r\leq   k-\ell$.
\end{proposition}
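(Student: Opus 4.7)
The plan is to construct, for each $0 \leq r \leq k-\ell$, an extended $[n+r, k]_q$ code $C_r$ whose hull has dimension exactly $\ell$ and whose minimum distance $d'$ satisfies $d \leq d' \leq d + r$, and then invoke Proposition \ref{cor:hull1} to obtain the desired EAQECC with parameters $[[n+r, k-\ell, d'; n-k-\ell+r]]_q$.

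First, since $q$ is odd, Corollary \ref{cor:odd2} furnishes a generator matrix $G$ of $C$ whose Gramian is $GG^T = \diag(a_1, \ldots, a_{k-\ell}, 0, \ldots, 0)$ with every $a_i \neq 0$. I would then set $G' = [\, G \mid M\,]$, where $M$ is the $k \times r$ matrix whose $i$-th row equals $b_i \boldsymbol{e}_i$ for $1 \leq i \leq r$ and equals $\boldsymbol{0}$ for $r < i \leq k$, with $\boldsymbol{e}_i$ denoting the $i$-th standard basis vector in $\mathbb{F}_q^r$ and $b_i \in \mathbb{F}_q$ chosen so that $a_i + b_i^2 \neq 0$. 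Such a $b_i$ exists because the equation $x^2 = -a_i$ has at most two roots in $\mathbb{F}_q$, so at least $q - 2$ values are admissible; the hypothesis $q \geq 5$ even leaves room to pick $b_i \neq 0$ if desired.

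Let $C_r$ be the linear code generated by $G'$. The rows of $G'$ remain linearly independent, so $C_r$ is an $[n+r, k]_q$ code. A direct computation yields
\[
G'(G')^T \;=\; GG^T + MM^T \;=\; \diag(a_1 + b_1^2, \ldots, a_r + b_r^2,\, a_{r+1}, \ldots, a_{k-\ell},\, 0, \ldots, 0),
\]
whose upper-left $(k-\ell) \times (k-\ell)$ block is a nonsingular diagonal matrix, so the rank equals $k - \ell$. By Proposition \ref{rank_eq}, $\dim(\hull(C_r)) = k - (k-\ell) = \ell$. For the minimum distance, any nonzero codeword of $C_r$ has the form $(\boldsymbol{c}, \boldsymbol{m})$ with $\boldsymbol{c} \in C$, so its Hamming weight is at least the weight of $\boldsymbol{c}$, giving $d' \geq d$; conversely, extending a weight-$d$ codeword of $C$ adds at most $r$ new nonzero coordinates, so $d' \leq d + r$.

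The only delicate step is the simultaneous choice of the $b_i$'s to prevent any diagonal entry $a_i + b_i^2$ from vanishing, and this is handled coordinate-by-coordinate via the counting argument above; the role of the odd characteristic is precisely to guarantee that $GG^T$ starts out with a nonzero diagonal form in the first place. Everything else is mechanical: applying Proposition \ref{cor:hull1} to the $[n+r, k, d']_q$ code $C_r$ with $\dim(\hull(C_r)) = \ell$ produces the claimed $[[n+r, k-\ell, d'; n-k-\ell+r]]_q$ EAQECC.
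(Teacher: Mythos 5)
Your proof is correct and it rests on the same key ingredient as the paper's, namely the diagonalizability of the Gramian in odd characteristic (Proposition \ref{prop:DiagGram}, via Corollary \ref{cor:odd2}); the difference is that you carry out the length-$r$ extension on the generator matrix where the paper works on the parity-check matrix. The paper appends $r$ rows and $r$ columns to $H$, using the orthogonal vectors $\boldsymbol{x}_1,\dots,\boldsymbol{x}_r$ and scalars $\alpha_i\in\mathbb{F}_q^*$ with $\alpha_i^2\ne -\boldsymbol{x}_i\boldsymbol{x}_i^T$, then computes $\rank\bigl(H'(H')^T\bigr)=n-k-\ell+r$ and controls the distance by the linear independence of any $d-1$ columns of $H'$. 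Your code $C_r=\langle [\,G\mid M\,]\rangle$ is in fact the same code up to a coordinate permutation: since the rows $\boldsymbol{g}_i$ of $G$ are orthogonal with $\langle\boldsymbol{g}_i,\boldsymbol{g}_i\rangle=a_i\ne 0$, your appended coordinate functionals $\boldsymbol{c}\mapsto b_i\lambda_i(\boldsymbol{c})=b_ia_i^{-1}\langle\boldsymbol{g}_i,\boldsymbol{c}\rangle$ match the paper's parity checks under $b_i=-a_i\alpha_i^{-1}$, and your condition $b_i^2\ne -a_i$ is equivalent to theirs. Your presentation has two small advantages: the bounds $d\le d'\le d+r$ fall out immediately from inspecting codewords $(\boldsymbol{\lambda}G,\boldsymbol{\lambda}M)$ rather than from a column-independence argument on $H'$, and because you may take $b_i=0$ (a degenerate appended coordinate, harmless to both the rank computation and the lower distance bound), your counting argument already succeeds for $q=3$, whereas the paper's insistence on $\alpha_i\ne 0$ is precisely what forces the hypothesis $q\ge 5$. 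The remaining steps (rank of $G'(G')^T$ equals $k-\ell$, hence $\dim(\hull(C_r))=\ell$ by Proposition \ref{rank_eq}, then Proposition \ref{cor:hull1}) are all sound.
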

\begin{proof}     If $r=0$ or $k=\ell$, then the  result follows directly from  Proposition  \ref{cor:hull1}. 
    Next,  assume that  $1\leq r\leq k-\ell$. Since $q$ is odd, there exists a generator matrix $G$ for $C$ such that the Gramian $GG^T $ is diagonalizable by Proposition \ref{prop:DiagGram}.  Precisely, 
    there exist    linearly independent elements
    $ \boldsymbol{x}_1,\boldsymbol{x}_2,\dots, \boldsymbol{x}_{k-\ell}$ in $ C$  such that $\boldsymbol{x}_i\boldsymbol{x}_i^T\ne0$ for all $1\leq i\leq n-k$ and $\boldsymbol{x}_i\boldsymbol{x}_j^T=0$ for all $1\leq i<j\leq k-\ell$.  
    
    Since $q\geq 5$, we have that $\{a^2\mid a\in \mathbb{F}_q^*\}$ contains at least $2$ elements. Hence,
    for each $i\in \{1,2,\dots,k-\ell\}$,  there exists $\alpha_i\in \mathbb{F}_{q}^*$
    such that $\alpha_i^2\ne -\boldsymbol{x}_i\boldsymbol{x}_i^T$.
    Let  $H$ be a parity check matrix for $C$ and let  $C^\prime$ be the code with parity check matrix
    \[
    H'=
    \left(\begin{array}{ccc|c}0&&&H\\ \hline
    \alpha_1&&  &\boldsymbol{x}_1\\
    &\ddots&&\vdots\\
    & &\alpha_r&\boldsymbol{x}_r
    \end{array}\right).
    \]
    Then
    \[H'(H')^T=
    \left(\begin{array}{cccc}
    HH^T&0&\dots &0\\
    0&\alpha_1^2+\boldsymbol{x}_1\boldsymbol{x}_1^T&&0\\
    \vdots  & &\ddots& \vdots \\
    0& 0&&\alpha_r^2+\boldsymbol{x}_r\boldsymbol{x}_r^T\\
    \end{array}\right).
    \]
    Since $\alpha_i^2\ne -\boldsymbol{x}_i\boldsymbol{x}_i^T$ for all $1\leq i \leq r$ and $\rank(H H^T)=  n-k-\ell$, we have that $\rank(H^\prime (H^\prime)^T)=n-k-\ell+r \geq 0$ since $\ell\leq \min\{k,n-k\}$ and $r\geq 0$. Equivalently, $\dim(\hull(C'))=\ell$.
    Since every $d-1$ columns of $H$ are linearly independent and $\alpha_i\ne 0$ for all $i\in \{1,2,\dots,r\}$,
    every $d-1$ columns of $H^\prime$ are linearly independent.
    It follows that $C'$ is an $[n+r,k, d^\prime]_{q}$ code where $d\leq d^\prime\leq d+r$.  
    Then by Proposition \ref{cor:hull1}, there exists an $[[n+r,k-\ell,d^\prime ;n-k-\ell+r]]_{q}$ EAQECC.
\end{proof}

In the same fashion, the Hermitian hulls of linear codes can be applied in constructions of EAQECCs in the following proposition.

\begin{proposition}
    \label{prop:herm1}
    Let $q\geq 3$ be  an odd prime power and  let $C$ be a classical $[n,k,d]_{q^2}$ linear code  such that  $\dim(\hull_H(C)) =\ell$.
    Then  there exists an $[[n+r,k-\ell,d^\prime ;n-k-\ell+r]]_{q}$ EAQECC with $d\leq d^\prime \leq d+r$ for each $0\leq r\leq   k-\ell$. 
\end{proposition}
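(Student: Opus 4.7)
The plan is to carry out the Hermitian analogue of the proof of Proposition \ref{propCons1}, substituting the conjugate transpose $(\cdot)^\dagger$ for $(\cdot)^T$, the Hermitian inner product for the Euclidean one, and the norm exponent $q+1$ for the squaring exponent $2$. The cases $r=0$ and $k=\ell$ reduce at once to Proposition \ref{cor:hull}, so I focus on $1 \leq r \leq k-\ell$.

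First, I invoke Proposition \ref{prop:DiagGramH} to produce a generator matrix $G$ of $C$ with $GG^\dagger$ diagonal. Reading off the rows corresponding to nonzero diagonal entries yields vectors $\boldsymbol{x}_1,\dots,\boldsymbol{x}_{k-\ell} \in C$ that are pairwise Hermitian-orthogonal with $\boldsymbol{x}_i\boldsymbol{x}_i^\dagger \neq 0$ for each $i$. A routine Frobenius computation shows that $\boldsymbol{x}_i\boldsymbol{x}_i^\dagger \in \mathbb{F}_q$. The norm map $N \colon \mathbb{F}_{q^2}^\ast \to \mathbb{F}_q^\ast$, $\alpha \mapsto \alpha^{q+1}$, is surjective, so its image has exactly $q - 1 \geq 2$ elements since $q \geq 3$; consequently, for each $i$ I can choose $\alpha_i \in \mathbb{F}_{q^2}^\ast$ with $\alpha_i^{q+1} \neq -\boldsymbol{x}_i\boldsymbol{x}_i^\dagger$. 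This is the Hermitian substitute for the step in Proposition \ref{propCons1} that required $q \geq 5$ (the Euclidean squaring map hits only $(q-1)/2$ values).

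Next, pick a parity-check matrix $H$ for $C$ and form the augmented matrix
\[
H' = \left(\begin{array}{ccc|c} 0 & & & H \\ \hline \alpha_1 & & & \boldsymbol{x}_1 \\ & \ddots & & \vdots \\ & & \alpha_r & \boldsymbol{x}_r \end{array}\right),
\]
and let $C'$ be the code of length $n+r$ with parity-check matrix $H'$. Because the $\alpha$-block has mutually disjoint supports and is supported away from the top $H$-block, and because the $\boldsymbol{x}_i$ are pairwise Hermitian-orthogonal, the off-diagonal contributions to $H'(H')^\dagger$ all vanish, giving
\[
H'(H')^\dagger = \operatorname{diag}\bigl(HH^\dagger,\; \alpha_1^{q+1} + \boldsymbol{x}_1\boldsymbol{x}_1^\dagger,\; \dots,\; \alpha_r^{q+1} + \boldsymbol{x}_r\boldsymbol{x}_r^\dagger\bigr).
\]
The choice of $\alpha_i$ ensures each of the $r$ trailing entries is nonzero, so $\rank(H'(H')^\dagger) = (n-k-\ell) + r$, and Proposition \ref{rank_eqH} applied to $C'$ (which has length $n+r$ and dimension $k$) yields $\dim(\hull_H(C')) = \ell$.

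Finally, I handle the distance bound: any codeword $\boldsymbol{c}' \in C'$ projects in its last $n$ coordinates to a codeword $\boldsymbol{c} \in C$ (from the top $H$-block of equations), while its first $r$ entries are uniquely determined by $\boldsymbol{c}$ via the $\alpha$-rows. Thus $\boldsymbol{c} = 0$ forces $\boldsymbol{c}' = 0$, which gives $d' \geq d$, and extending a minimum-weight codeword of $C$ adds at most $r$ new nonzero entries, yielding $d' \leq d+r$. Applying Proposition \ref{cor:hull} to the $[n+r,k,d']_{q^2}$ code $C'$ produces the desired $[[n+r,k-\ell,d';n-k-\ell+r]]_q$ EAQECC. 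The only genuinely delicate step is ensuring the norm map has enough images to avoid $-\boldsymbol{x}_i\boldsymbol{x}_i^\dagger$, which is precisely where the hypothesis $q \geq 3$ is used, and which is weaker than the $q \geq 5$ required in the Euclidean setting thanks to the surjectivity of the norm.
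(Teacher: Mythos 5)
Your proposal is correct and follows essentially the same route as the paper's own proof: the same case split, the same use of Proposition \ref{prop:DiagGramH} to extract pairwise Hermitian-orthogonal vectors with nonzero norms, the same augmented parity-check matrix $H'$, and the same rank computation feeding into Proposition \ref{cor:hull}. You merely supply two details the paper leaves implicit --- that $\boldsymbol{x}_i\boldsymbol{x}_i^\dagger$ lies in $\mathbb{F}_q$ and that the surjectivity of the norm map onto $\mathbb{F}_q^\ast$ is what makes $q\geq 3$ (rather than $q\geq 5$) suffice --- which is a welcome clarification but not a different argument.
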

\begin{proof} 
    If $r=0$ or $k=\ell$, then the  result follows directly from  Proposition  \ref{cor:hull}. Next,  assume that  $1\leq r\leq k-\ell$. Since $q$ is odd, there exists a generator matrix $G$ for $C$ such that  $GG^\dagger $ is diagonalizable by Proposition \ref{prop:DiagGramH}.  Precisely, 
    there exist    linearly independent elements
    $ \boldsymbol{x}_1,\boldsymbol{x}_2,\dots, \boldsymbol{x}_{k-\ell}$ in $ C$  such that $\boldsymbol{x}_i\boldsymbol{x}_i^\dagger\ne0$ for all $1\leq i\leq n-k$ and $\boldsymbol{x}_i\boldsymbol{x}_j^\dagger=0$ for all $1\leq i<j\leq k-\ell$.  
    
    For each $i\in \{1,2,\dots,r\}$, there exist $\alpha_i\in \mathbb{F}_{q^2}^*$ such that $\alpha_i{^{q+1}}\ne -\boldsymbol{x}_i\boldsymbol{x}_i^\dagger$ since $q\geq 3$. Let  $H$ be a generator matrix for $C^{\bot_H}$ and let  $C^\prime$ be the code with parity check matrix
    \[
    H'=
    \left(\begin{array}{ccc|c}0&&&H\\ \hline
    \alpha_1&&  &\boldsymbol{x}_1\\
    &\ddots&&\vdots\\
    & &\alpha_r&\boldsymbol{x}_r
    \end{array}\right).
    \]
    Then
    \[H'(H')^\dagger=
    \left(\begin{array}{cccc}
    HH^\dagger&0&\dots &0\\
    0&\alpha_1^{q+1}+\boldsymbol{x}_1\boldsymbol{x}_1^\dagger&&0\\
    \vdots  & &\ddots&\\
    0& 0&&\alpha_r^{q+1}+\boldsymbol{x}_r\boldsymbol{x}_r^\dagger\\
    \end{array}\right).
    \]
    Since $\alpha_i^{q+1}\ne -\boldsymbol{x}_i\boldsymbol{x}_i^\dagger$ for all $1\leq i \leq r$ and $\rank(H H^\dagger)=  n-k-\ell$, we have that $rank(H^\prime (H^\prime)^\dagger)=n-k-\ell+r\geq 0 $ since $\ell\leq \min\{k,n-k\}$ and $r\geq 0$. Equivalently, $\dim(\hull_H(C'))=\ell$.
    It is easily seen that very $d-1$ columns of $H^\prime$ are linearly independent.
    Hence, $C'$ is an $[n+r,k, d^\prime]_{q^2}$ code where $d\leq d^\prime\leq d+r$.  
    By Proposition \ref{cor:hull}, there exists an $[[n+r,k-\ell,d^\prime ;n-k-\ell+r]]_{q}$ EAQECC.
\end{proof}

Observe that     linear  $[n,k]_q$ and  $[n,k] _{q^2}$ codes   with  $\frac{n}{2}<k \leq n$  have   hull dimension $\ell\leq \min\{k,n-k\} \leq n-k$ which implies that $k-\ell  \geq   2k-n$.   From  the constructions in Propositions \ref{propCons1} and \ref{prop:herm1}, we have  an EAQECC $Q$ with parameters  $[[n+r,k-\ell,d^\prime ;n-k-\ell+r]]_{q}$  for all $0\leq r\leq   k-\ell$.  Hence,  the net rate of $Q$ is \[\frac{(k-\ell)-(n-k-\ell+r)}{n+r}=\frac{2k-n-r}{n+r}   >0\] 
for all classical linear codes with  $k> \frac{n}{2}$ and  $0\leq r<2k-n$ since $2k-n \leq k-\ell$.   
In addition, if the dimension of the input linear code is  \begin{align}
k\geq \frac{3n+r}{4},\label{eqcondition}
\end{align} its hull dimension is   $\ell\leq \min\{k,n-k\} \leq n-k \leq  n- \frac{3n+r}{4}=\frac{n-r}{4}$ which implies that $k-\ell  \geq   k-\frac{n-r}{4}\geq  \frac{3n+r}{4}-\frac{n-r}{4}= \frac{n+r}{2}$, and hence, the rate of $Q$ is \[ \frac{k-\ell}{n+r}\geq   \frac{1}{2}. \]

To obtain  EAQECCs with good minimum distances, the input linear code using Propositions \ref{propCons1} and \ref{prop:herm1}   can  be chosen from the best-known linear codes in the database of \cite{BCP1997}. Moreover, the required number of maximally entangled states $c:=n-k-\ell+r$  can be adjusted  by the parameter $r$.

\begin{remark}  We have the following observations and  suggestions   for  the constructions of   EAQECCs   in Propositions \ref{propCons1} and \ref{prop:herm1}. 
    
    \begin{enumerate}
        \item   By choosing    best-known linear codes   in  \cite{BCP1997} satisfy  the condition $ k\geq \frac{3n+r}{4}$ in \eqref{eqcondition},   all the  EAQECCs obtained in Propositions \ref{propCons1} and \ref{prop:herm1}  are good in the sense that  they have good rate and positive net rate.     Moreover, some of them have  good minimum distances.

        \item   Under the assumption   $\ell\leq k-\frac{n+r}{2}$,   EAQECCs constructed in  Propositions \ref{propCons1} and \ref{prop:herm1}   have    good rate   \[\frac{k-\ell}{n+r}\geq   \frac{1}{2} \]  and positive net rate \[\frac{(k-\ell)-(n-k-\ell+r)}{n+r}=\frac{2k-n-r}{n+r}   > 0\]  for all   $0\leq r    < 2k-n 
        $.  It is easily seen that  the condition   $\ell\leq k-\frac{n+r}{2}$  is slightly  lighter than  \eqref{eqcondition} and 
        it  is equivalent to finding  classical  linear codes with large dimension and small Euclidean/Hermitian hull dimension.
        Therefore, linear complementary dual codes studied in   \cite{M1992, CG2016,  CMTQ2018,CMTQ2019 , GJG2018,   CMT2018}  would be good candidates in constructions of   EAQECCs. 
    \end{enumerate}

\end{remark}

%
%




\end{document}